\def\csname ver@subfig.sty\endcsname{}
\soulregister{\em}{0}
\soulregister{\mdash}{0}
\soulregister{\st}{7}
\soulregister{\bf}{0}
\soulregister{\fig}{7}
\soulregister{\ie}{0}
\soulregister{\etc}{0}
\soulregister{\eg}{0}
\soulregister{\wrt}{0}
\soulregister{\resp}{0}
\soulregister{\cf}{0}
\definecolor{dkgreen}{rgb}{0,0.6,0}
\definecolor{forestgreen}{RGB}{0,100,50}
\definecolor{redd}{RGB}{76,0,153}
\lstdefinestyle{customxml}
{language=XML,
keywordstyle=\color{green},
basicstyle=\ttfamily\scriptsize,
morekeywords={id, specType},
mathescape=true,
escapeinside={/*@}{@*/},
tagstyle=\color{forestgreen},
keywordstyle=\color{redd},
stringstyle=\ttfamily\color{blue},
frame=single
}
\newcounter{tempctr}
\newcommand{\breakenumistart}{%
  \setcounter{tempctr}{\value{enumi}}%
  \end{enumerate}%
}
\newcommand{\breakenumiend}{%
  \begin{enumerate}%
  \setcounter{enumi}{\value{tempctr}}%
}
\newtheorem{theorem}{Theorem}[section]
\newtheorem{proposition}[theorem]{Proposition}%[section]
\newtheorem{corollary}[theorem]{Corollary}%[section]
\newtheorem{definition}{Definition}%[section]
\newtheorem{remark}[theorem]{Remark}%[section]
\newtheorem{syntax}{Syntax}
\newtheorem{semantics}{Semantics}
\newtheorem{example}{Example}%[section]
\newcommand{\defn}[1]{Def.~\ref{defn:#1}}
\newcommand{\Defn}[1]{Definition~\ref{defn:#1}}
\newcommand{\fig}[1]{Fig.~\ref{fig:#1}}
\newcommand{\Fig}[1]{Figure~\ref{fig:#1}}
\newcommand{\ex}[1]{Ex.~\ref{ex:#1}}
\newcommand{\tabsize}{\fontsize{10}{10.5pt}\selectfont}
\newcommand{\resp}[1][\xspace]{resp.#1}
\newcommand{\mdash}{---}
\newcommand{\cf}[1][\ ]{cf.#1}
\newcommand{\ie}[1][\ ]{i.e.,#1}
\newcommand{\etc}[1][\ ]{etc.#1}
\newcommand{\eg}[1][\ ]{e.g.,#1}
\newcommand{\wrt}[1][\ ]{w.r.t.#1}
\newcommand{\bydef}[1]{\ensuremath{\stackrel{def}{#1}}}
\newcommand{\oftype}{\ensuremath{\!:\!}}
\newcommand{\non}[1]{\ensuremath{\overline{#1}}}
\newcommand{\goesto}[1][]{\ensuremath{\stackrel{#1}{\rightarrow}^{}}}
\newcommand{\true}{\ensuremath{\mathit{true}}}
\newcommand{\false}{\ensuremath{\mathit{false}}}
\newcommand{\require}{\ensuremath{\ \ \mathbf{Require}\ \ }}
\newcommand{\accept}{\ensuremath{\ \ \mathbf{Accept}\ \ }}
\newcommand{\cB}{\ensuremath{\mathcal{B}}}
\newcommand{\cT}{\ensuremath{\mathcal{T}}}
\newcommand{\compi}[1]{\ensuremath{\overline{#1}\:}}
\newlength{\headheightold}
\newcommand{\coverpage}{
\pagestyle{fancy}
\pagenumbering{roman}

%%%%%% LOGOS %%%%%%%%%
\fancyhead[R]{\includegraphics[height=1.8cm]{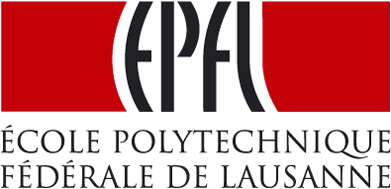}}
\setlength{\headheightold}{\headheight}
\addtolength{\textheight}{\headheight}
\setlength{\headheight}{2cm} 
\addtolength{\textheight}{-2cm}
\renewcommand{\headrulewidth}{0.4pt}
\renewcommand{\footrulewidth}{0pt}

\fancyhead[C]{\includegraphics[height=1.25cm]{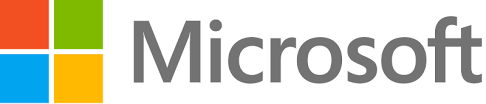}}
\setlength{\headheightold}{\headheight}
\addtolength{\textheight}{\headheight}
\setlength{\headheight}{2cm} 
\addtolength{\textheight}{-2cm}
\renewcommand{\headrulewidth}{0.4pt}
\renewcommand{\footrulewidth}{0pt}

\fancyhead[L]{\includegraphics[height=2.4cm]{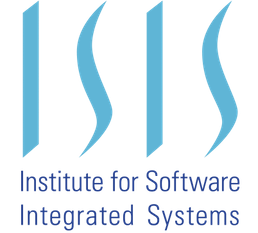}}
\setlength{\headheightold}{\headheight}
\addtolength{\textheight}{\headheight}
\setlength{\headheight}{2cm} 
\addtolength{\textheight}{-2cm}
\renewcommand{\headrulewidth}{0.4pt}
\renewcommand{\footrulewidth}{0pt}

%%%%%% COVER BODY %%%%%%%%%
\vspace*{20mm}
\begin{center}
  \begin{minipage}{0.8\textwidth}
    \centering
    \Huge \bf \mytitle

    \bigskip
    \hrule

%    \bigskip
%    \Large
%    EPFL IC IIF RiSD Technical Report
%
%    \,EPFL-REPORT-\reportno
%
%    \url{http://infoscience.epfl.ch/record/\reportno}

    \vspace*{5mm}
    \large
    \myauthors

    \bigskip
    \today
  \end{minipage}
\end{center}

    \vspace*{25mm}
\paragraph{Abstract:}{\small \myabstract}
\vfill
\clearpage
\pagestyle{plain}
\setlength{\headheight}{\headheightold}
}
\newcommand{\mytitle}{Coordination of Dynamic Software Components with JavaBIP}
\newcommand{\myauthors}{Anastasia Mavridou, Valentin Rutz, Simon Bliudze}
\newcommand{\myabstract}{JavaBIP allows the coordination of software components by clearly separating the functional and coordination aspects of the system behavior. JavaBIP implements the principles of the BIP component framework rooted in rigorous operational semantics. Recent work both on BIP and JavaBIP allows the coordination of static components defined prior to system deployment, \ie the architecture of the coordinated system is fixed in terms of its component instances.  Nevertheless, modern systems, often make use of components that can register and deregister dynamically during system execution. In this paper, we present an extension of JavaBIP than can handle this type of dynamicity. We use first-order interaction logic to define synchronization constraints based on component types. Additionally, we use directed graphs with coloring edges to model dependencies among components that determine the validity of an online system. We present the software architecture of
our implementation; provide and discuss performance evaluation results. }
\begin{document}

\coverpage

\thispagestyle{empty}
\vspace*{\fill}
\begin{verbatim}
@TechReport{DynamicJavaBIP,
Author = {Anastasia Mavridou and Valentin Rutz and Simon Bliudze},
Title = {Coordination of {D}ynamic {S}oftware {C}omponents with {J}ava{BIP}},
note     = {available at \url{https://arxiv.org/abs/1707.09716}},
Year = {2017},
Eprint = {arXiv:1707.09716}
}

\end{verbatim}
\clearpage
\pagenumbering{arabic}
\setcounter{page}{1}

\tableofcontents
\clearpage

\section{Introduction}
We have previously introduced JavaBIP~\cite{MiSE14p25,SPE:SPE2495} that allows coordinating software components exogenously, \ie without requiring access to component source code. JavaBIP relies on the following observations. Domain specific components have states (\eg idle, working) that are known to component users with domain expertise.  Furthermore, components always provide APIs that allow programs to invoke operations (\eg suspend or resume) in order to change their state, or to be notified when a component changes its state spontaneously.  Thus, component behavior can be easily represented by Finite State Machines (FSMs). 

JavaBIP brings the BIP principles into a more general software engineering context than that of embedded systems, in which code generation might not be desirable due to continuous code updates. Thus, to use JavaBIP, instead of generating Java code from the BIP modeling language, developers must provide\mdash for the relevant components\mdash the corresponding FSMs in the form of annotated Java classes. The FSMs describe the protocol that must be respected to access a shared resource or use a service provided by a component. FSM transitions are associated with calls to API functions, which force a component to take an action, or with event notifications that allow reacting to external events. 

For component coordination, JavaBIP provides two primitive mechanisms: 1)~multi-party synchronizations of component transitions and 2)~asynchronous event notifications. The latter embodies the reactive programming paradigm. In particular, JavaBIP extends the Actor model~\cite{actorsAgha}, since event notifications can be used to emulate asynchronous messages, while providing the synchronization of component transitions as a primitive mechanism gives developers a powerful and flexible tool to manage coordination. The synchronization of component transitions is managed by a runtime called JavaBIPEngine, which, for simplicity, we call ``engine'' in the rest of the paper. Notice that in a completely asynchronous system the engine is not needed.

JavaBIP clearly separates system-wide coordination policies from component behavior.
Synchronization constraints, defining the possible synchronizations among transitions of different components  \ie the set of possible component \emph{interactions},
are specified independently from the design of individual components in dedicated XML files.  %For coordination scenarios that require global state information, dedicated \emph{monitor} components can be added. This allows one to centralize all the information related to coordination in one single location, instead of distributing it across the components. 
This separation of functional and coordination aspects greatly reduces the burden of system complexity. Finally, integration with the BIP framework, through a JavaBIP to BIP code generation tool, allows the use of existing deadlock-detection and model checking tools~\cite{dfinder,esst4bip} ensuring the correctness of JavaBIP systems.

The previous implementation of JavaBIP~\cite{SPE:SPE2495} was static. To coordinate a system, the full set of components had be registered before starting the engine. No components could be added on-the-fly and, most importantly, if a failure occurred in a single component, the engine execution had to stop and the full set of constraints had to be computed anew. Notice that none of the current BIP implementations~\cite{bip,bip06,quilbeuf10-distr} allows to add or remove components on-the-fly, including DyBIP presented in~\cite{bozga2012modeling} that allows dynamically changing the set of interactions among a fixed set of components at runtime. This might be problematic, since modern systems, \eg large banking systems or modular smartphones, make use of components that can register and deregister during system execution.

To allow dynamicity in JavaBIP, we use first-order interaction logic to describe synchronization constraints on component types. As a result, a developer can write synchronization constraints without knowing the exact number of components in the system. Thus, component instances of known types, \ie types for which synchronization constraints exist, can register at runtime without any additional input from the developer. To optimize JavaBIP performance, we have introduced a notion of system validity: \emph{a system is valid if and only if its set of possible interactions is not empty}. The notion of validity allows to start and stop the engine automatically at runtime by just checking the status of the system. By stopping the engine if the system is invalid, we eliminate any processing time needed by the engine. To check system validity, we use directed graphs with edge coloring to model component synchronization dependencies. Notice that the introduced notion of validity is only relevant for the engine: in an invalid system components can still communicate asynchronously.

%The slightest modification of the program structure in terms of components and connections among them requires stopping the system, applying the required modifications and relaunching it. This means that it allows coordinating only software components that are statically defined prior to system deployment. This is extremely problematic, since modern systems, including large banking systems or systems designed for modular smartphones, make use of components that can register and deregister during the system execution.

We have extended the interface and implementation of the engine to register, deregister, and pause a component at runtime. The difference between pausing and deregistering a component is as follows. If a component deregisters, then the engine clears all the associated data and references to this component; other components cannot synchronize with the deregistered component unless it registers anew. If a component is paused, other components cannot synchronize with it but the engine keeps all associated data and references to it; the paused component can start synchronizing with other components by simply informing the engine that it is back on track. %Notice that a component is paused only from the engine's perspective; this does not prevent it from executing or interacting with other components in a purely asynchronous manner. 

The rest of the report is structured as follows. Section \ref{secn:javabip} presents the JavaBIP framework. Section \ref{secn:example} describes our motivating case study. Section \ref{secn:validity} presents the notion of JavaBIP system validity and the construction of validity graphs. Section \ref{secn:interactionlogic} presents the interaction logic and the macro-notation used to specify JavaBIP synchronization constraints on component types. Section \ref{secn:implementation} describes the implemented software architecture and presents performance results. Section \ref{secn:related} discusses related work.  Section \ref{secn:conclusion} summarizes the results and future work directions.

\section{The JavaBIP Framework}
\label{secn:javabip}

JavaBIP implements the BIP (Behavior-Interaction-Priority) coordination mechanism~\cite{bip},
for coordination of concurrent components. In BIP, the behavior of components is described by Finite State Machines (FSMs) having transitions labeled with \emph{ports} and extended with data stored in local variables. Ports form the interface of a component and are used to define its interactions with other components.  They can also export part of the local variables, allowing access to the component's data. 
Component coordination is defined in BIP by means of \emph{interaction models}, \ie sets of interactions. Interactions are sets of ports that define allowed synchronizations among components.    

JavaBIP takes as input the \textit{system specification}, which is provided by the user and consists of the following:
\begin{itemize}
\item A \textit{behavior specification} for each component type, which is an FSM extended with ports and data provided as an annotated Java class.
\item The \textit{glue specification}, which is the interaction model of the system, is provided as an XML file. It specifies how the transitions of different component types must be synchronized, \ie synchronization constrains.
\item The optional \emph{data-wire specification}, which is the data transfer model of the system, is provided as an XML file. It specifies which and how data are exchanged among component types.  
\end{itemize}

For property analysis, the system specification can be automatically translated into an equivalent model of the system in the BIP language. This model can then be verified for deadlock freedom or other properties, using DFinder~\cite{dfinder}, ESST or nuXmv~\cite{esst4bip}.  Other analyses can be performed using any tool for which a model transformation from BIP is available.

\section{Motivating Case Study}
\label{secn:example}

Modular phones require application layer specifications that can handle dynamic device insertion and removal at runtime. In the rest of the paper, we refer to the phone's devices as \textit{modules}. In this case study, we model in JavaBIP some of the application layer protocols offered by Google's Greybus specification\footnote{https://github.com/projectara/greybus-spec}. 

\begin{figure}[t]
  \centering
  \includegraphics[width=0.8\textwidth]{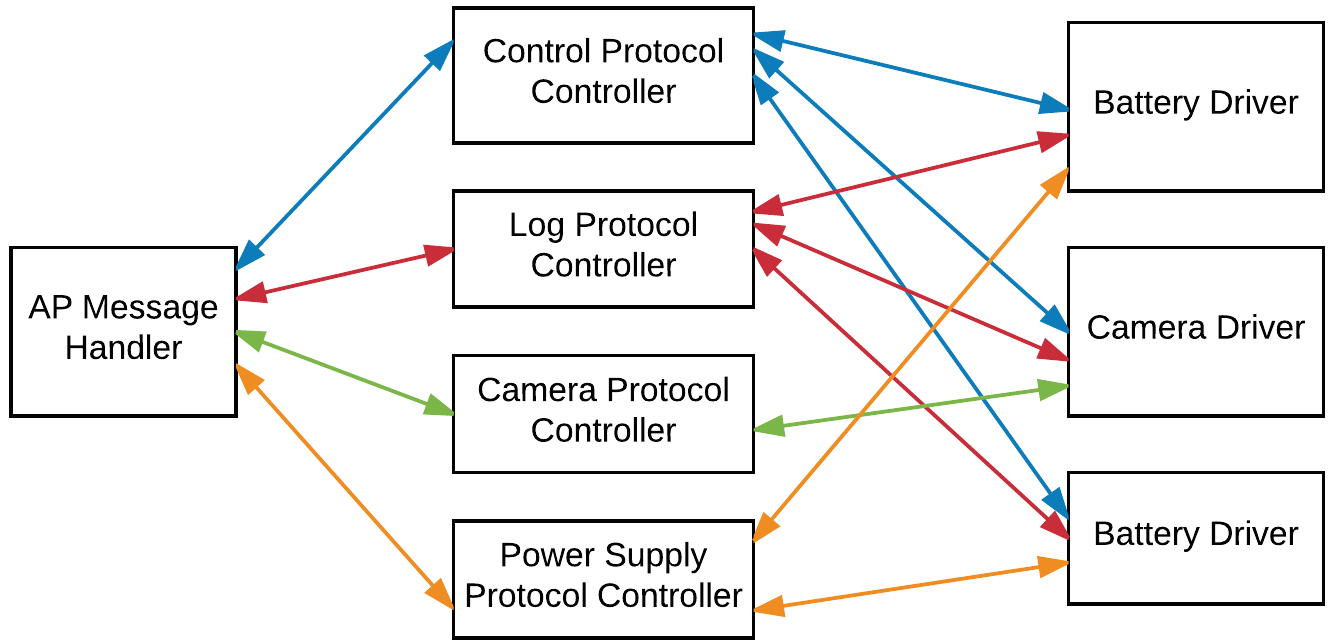}
  \caption{Modular phone architecture.}
  \label{fig:modularphone}
\end{figure}

\Fig{modularphone} illustrates the composite component types, of the case study. Greybus requires that exactly one application processor (AP) is present in the system for storing user data and executing applications. We consider two types of modules that can be inserted on the phone's frame at runtime: 1) power supply modules, \eg batteries and 2) cameras. Any number of instances of these two types can be inserted or removed from the phone at runtime. \Fig{modularphone} presents an example configuration of a phone, in which two battery and one camera modules are connected. These modules communicate with the AP through dedicated device class connection protocols: the \emph{camera}, \emph{power supply}, and \emph{log} protocols. The latter can be used by any module to send human-readable debug log messages to AP. Additionally, AP uses the \emph{control protocol} to perform basic initialization and configuration actions with other modules. 
\newcommand{\modifier}[1]{\ensuremath{\mathtt{#1}}} % Could be \mathit{#1} or other
\newcommand{\AppLayer}{\modifier{Application\ Layer}}%
\newcommand{\MessHand}{\modifier{AP\ Message\ Handler}}%
\newcommand{\Contr}{\modifier{Controller}}%
\newcommand{\Driver}{\modifier{Driver}}%
\newcommand{\ReqWorker}{\modifier{AP\ Request\ Worker}}%
\newcommand{\RespWorker}{\modifier{AP\ Response\ Worker}}%
\newcommand{\MessWorker}{\modifier{AP\ Message\ Worker}}%
\newcommand{\RecFifo}{\modifier{AP\ Receiver\ Fifo}}%
\newcommand{\CPC}{\modifier{Control\ Protocol\ Controller}}%
\newcommand{\LPC}{\modifier{Log\ Protocol\ Controller}}%
\newcommand{\CamPC}{\modifier{Camer\ Protocol\ Controller}}%
\newcommand{\PSPC}{\modifier{Power\ Supply\ Protocol\ Controller}}%
\newcommand{\BatDriver}{\modifier{Battery\ Driver}}%
\newcommand{\CamDriver}{\modifier{Camera\ Driver}}%
\newcommand{\CCH}{\modifier{Control\ Connect\ Handler}}%
\newcommand{\CDH}{\modifier{Control\ Disconnect\ Handler}}%
\newcommand{\LogH}{\modifier{Log\ Handler}}%
\newcommand{\CamCaptH}{\modifier{Camera\ Capture\ Handler}}%
\newcommand{\CamStreamH}{\modifier{Camera\ Stream\ Handler}}%
\newcommand{\PSH}{\modifier{Power\ Supply\ Handler}}%
\begin{figure} [t]
\[
  \renewcommand{\arraystretch}{1.5}
  \begin{array}{@{}l@{}}
    \begin{multlined}
      \AppLayer ::= (\MessHand).(\Contr)^+.(\Driver)^*
    \end{multlined}
    \\
    \begin{multlined}
      \MessHand ::= (\ReqWorker).(\RespWorker).\\[-12pt]
      (\MessWorker).(\RecFifo)
    \end{multlined}
    \\
    \begin{multlined}
      \Contr ::= (\CPC).(\LPC).\\[-12pt]
      (\CamPC).(\PSPC)
    \end{multlined}
    \\
    \begin{multlined}
      \Driver ::= (\BatDriver)^*.(\CamDriver)^*
    \end{multlined}
    \\
    \begin{multlined}
      \CamDriver ::= (\CCH).(\CDH).\\[-12pt]
      (\LogH).(\CamCaptH).(\CamStreamH)
    \end{multlined}
    \\
    \begin{multlined}
      \BatDriver ::= (\CCH).(\CDH).\\[-12pt]
      (\LogH).(\PSH)
    \end{multlined}
  \end{array}
\]
  \caption{Hierarchical decomposition of the \texttt{Application Layer} into components.}
  \label{fig:componentization}
\end{figure}
If no power supply or camera modules are connected, the system configuration would consist of the \texttt{AP Message Handler}, \texttt{Control Protocol Controller}, the \texttt{Log Protocol Controller}, \texttt{Camera Protocol Controller}, and the \texttt{Power Supply Protocol Controller} composite components. The grammar in \fig{componentization} shows how to obtain the desired systems as the incremental composition of components. Operators $.$ (dot), $\cdot^*$ and $\cdot^+$ are used as usual to denote composition and repetition. Notice that \fig{modularphone} illustrates only one of the possible system configurations that are described by the grammar in \fig{componentization}. 

A Greybus protocol defines a number of Greybus operations, which are \emph{request-response pairs} of remote procedure calls from one module to another. The bi-directional arrows in \fig{modularphone} represent Greybus operations.  For instance, the AP very often needs to retrieve information from other modules. This requires that a message requesting information be paired with a response message containing the information requested. In many cases, Greybus operations need to be performed in a specific order. Additionally, the access to shared resources such as memory and logging services needs to be controlled among modules. We enforce action flow of Greybus operations, as well as controlled access to the phone's shared resources with JavaBIP. We developed the case study using the WebGME-BIP design studio\footnote{https://github.com/anmavrid/webgme-bip}, the complete system exceeds $2000$ lines of code.

\subsection{Componentization and Interaction Model}
We have used \emph{architecture diagrams}~\cite{mavridou2016diagrams} to model the architecture style of the case study, which is shown in \fig{style}. An architecture style defines the set of all possible architectures for any number of components in the system. An architecture diagram consists of a set of \emph{component types}, with associated cardinality constraints  representing the expected number of instances of each component type and a set of \emph{connector motifs}.  The boxes in \fig{style} represent the atomic component types of the case style, which are the following: \ReqWorker, \RespWorker, \MessWorker, \CPC, \RecFifo, \LPC, \CamPC,  \CCH, \PSPC, \CDH, \LogH, \CamCaptH, \CamStreamH, \PSH. The cardinalities of these component types are shown in \fig{style} in the upper left corner of the corresponding boxes. For instance, the cardinality of the \ReqWorker\ component type is $1$, while the cardinality of the \CamCaptH\ component type is $n$. The valuation of the $n$ parameter depends on the number of cameras attached on the phone.

\Fig{style} contains a set of connector motifs, which define the glue of the case study, \ie the interaction model. Each connector motif defines a set of BIP connectors~\cite{bip}, which are non-empty sets of \emph{port types}.  Each connector motif end has two associated constraints: {\em multiplicity} and {\em degree}, represented as a pair $m : d$. Multiplicity constrains the number of instances of the port type that must participate in a connector defined by the motif; degree constrains the number of connectors attached to any instance of the port type. Cardinalities, multiplicities and degrees are either natural numbers or intervals. In this case study we have used only natural numbers.

Let us consider, for instance, the connector motif that connects the port type \texttt{receive} of \MessWorker\ with the port type \texttt{rm} of \RecFifo. The associated constraint pair of each connector motif end is equal to $1:1$. This means a conforming architecture of the style will include a binary BIP connector attached to the port instance \texttt{receive} of the component instance of \MessWorker~and to the port instance \texttt{rm} of the component instance of \RecFifo. This binary BIP connector represents a synchronization of the actions \texttt{rm} and \texttt{receive} of the corresponding component instances. 
Lets us now consider the connector motif that connects the port type \texttt{send\_log} of \LogH~with the port type \texttt{rcvFromDriver} of \LPC. The associated constraint pair of the connector motif end attached to \texttt{send\_log} is equal to $1:1$, while the constraint pair of the connector motif end attached to \texttt{rcvFromDriver} is equal to $1:n$, where $n$ is the cardinality of \LogH. A conforming architecture that contains $n$ instances of \LogH~will also contain $n$ binary connectors, each connecting a distinct component instance of \LogH~to the unique component instance of \LPC.  %% In general, a connector motif can define several possible sets of connectors for a given set of port instances.  In the present case study, this ambiguity is always resolved by using a refined port typing, based on additional action semantics, which we omit for the sake of the presentation simplicity.

\begin{sidewaysfigure}
  \centering
  \includegraphics[width=\textwidth]{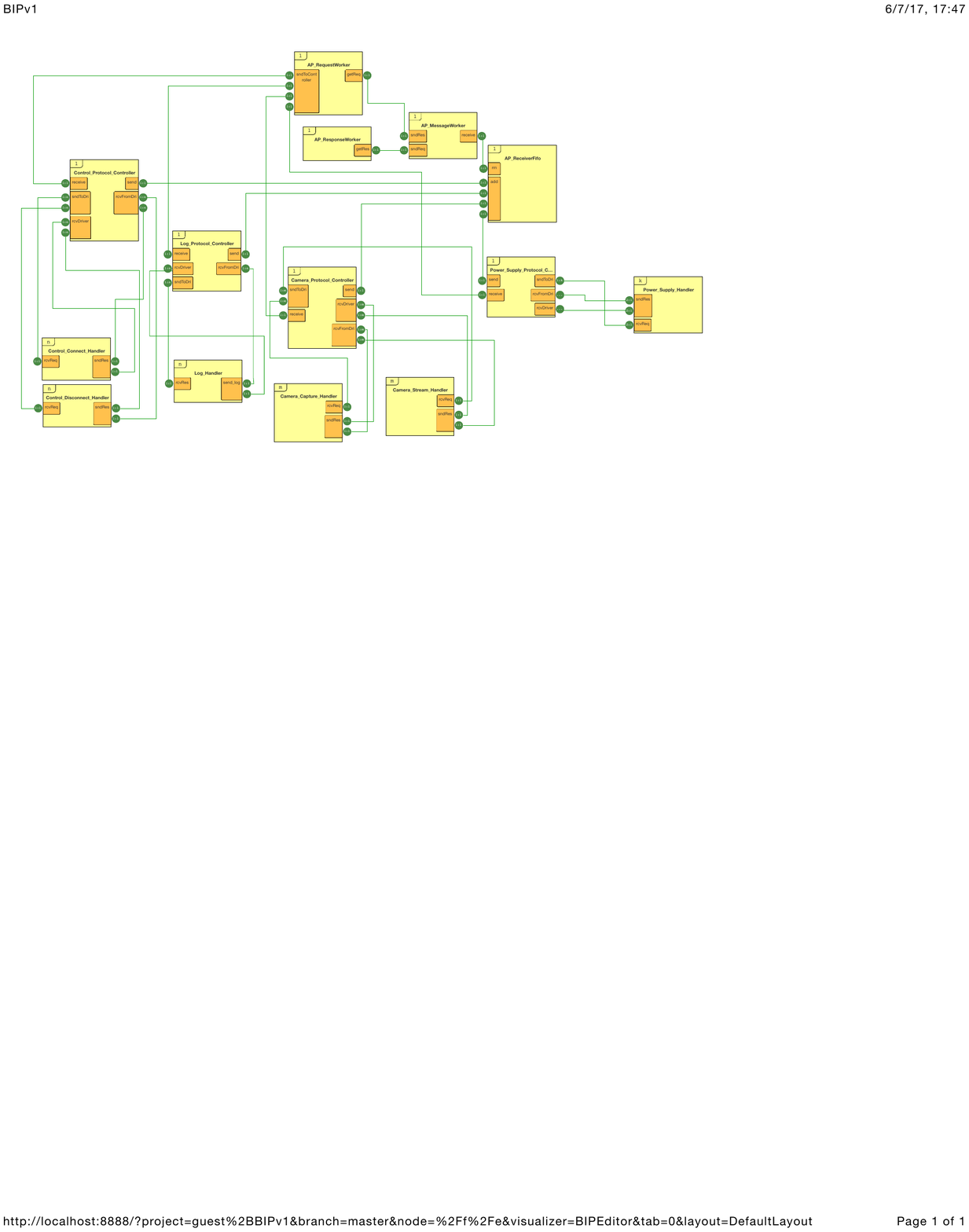}
  \caption{Modular phone architecture style.}
  \label{fig:style}
\end{sidewaysfigure}

\section{Interaction Logic and Macro-notation}
\label{secn:interactionlogic}

The glue specification is defined in JavaBIP through a macro-notation, similar to the one introduced in \cite{bozga2012modeling}, based on first-order interaction logic. This notation imposes synchronization constraints based on component types rather than on component instances, which allows a developer to write a glue specification without knowing the exact number of components in the system. Instances of component types for which synchronization constraints exist in the glue specification can be dynamically registered or deregistered at runtime without requiring additional input or changes in the glue specification. 
In the next subsections, we present the propositional and first-order interaction logic, as well as the JavaBIP Require/Accept macro-notation, which is based on first-order interaction logic. %The presented macro-notation is more expressive than the one presented in the static JavaBIP implementation~\cite{MiSE14p25} as it allows restricting the cardinality of a port type, \ie how many instances of this port type must synchronize in the same interaction.

\subsection{Propositional Interaction Logic}
\label{sec:ai:inter}
The propositional interaction logic (PIL), studied in \cite{alcon,BliSif10-causal-fmsd}, is a Boolean logic used to characterize a \emph{configuration}, \ie a non-empty set of interactions among components on a global set of ports $P$.  
We assume that $P$ is given and $a \subseteq P$.

\begin{definition}
An interaction $a$ is a set of ports $a \subseteq P$ such that $a \neq \emptyset$.
\end{definition}

\begin{syntax}
The propositional interaction logic is defined by the grammar:
\begin{align}
\phi ::= \true\ |\ p\ |\ \compi \phi\ |\ \phi \vee \phi
\,,
\qquad\text{with any $p \in P$.}
\end{align}
Conjunction  is defined as follows:  $\phi_1 \wedge \phi_2 \bydef{=} \compi{(\compi{\phi_1} \vee \compi{\phi_2})}$. Implication is defined as follows: $\phi_1 \Rightarrow \phi_2 \bydef{=} \compi{\phi_1} \vee \phi_2$.
To simplify the notation, we omit conjunction in monomials, \eg writing $sr_1r_2$ instead of $s \wedge r_1 \wedge r_2$.
\end{syntax}

\begin{semantics}
The meaning of a PIL formula $\phi$ is defined by the following satisfaction relation. Let $\gamma$ be a non-empty configuration. We define: $\gamma \models \phi$ iff for all $a \in \gamma$, $\phi$ evaluates to $\true$ for the valuation induced by $a$: $p=\true$, for all $p\in a$ and $p=\false$, for all $p \not\in a$.
\end{semantics}

The operators meet the usual Boolean axioms and the additional axiom $\bigvee_{p \in P} p = \true$  meaning that interactions are non-empty sets of ports.

\begin{example}
\label{ex:starArchitecture}
Consider a \emph{Star architecture}, where a single component $C$ acts as the center, and three other components $S_1,\ S_2,\ S_3$ communicate with the center through binary synchronizations. Component $C$ has a port $p$ and all other components have a single port $q_i\ (i=1,2,3)$. The corresponding PIL formula is: $p q_1\compi{q_2}\compi{q_3} \vee p \compi{q_1} q_2\compi{q_3} \vee p \compi{q_1} \compi{q_2} q_3$.
\end{example}

\subsection{First-order Interaction Logic}
\label{sec:il:firstorder}

We extend the propositional interaction logic presented in Subsection \ref{sec:ai:inter} with quantification over components to define interactions independently from the number of component instances.  This extension is particularly useful because, in practice, systems are built from multiple component instances of the same component type. A first-order interaction logic was also presented in~\cite{bozga2012modeling} with additional history variables. 
We make the following assumptions:
\begin{itemize}
\item A finite set of component types $\cT = \{T_1, \dots, T_n\}$ is given. Instances of a component type have the same interface and behavior. We write $c \oftype T$ to denote that a component $c$ is of type $T$. %Finally, we assume the existence of the universal component type U, such that any component is of this type.
\item The interface of each component type has a distinct set of ports. We denote by $T.p$ the \emph{port type} $p$, \ie a port belonging to the interface of type $T$. We write $T.P$ to denote the set of port types of component type $T$ and $\cT.P$ to denote the set of port types of all component types.
 We write $c.p$, for a component $c \oftype T$,  to denote the {\em port instance} of type $T.p$. %and $c.P$ to denote the set of ports of the component $c$. 
 \end{itemize}

Let $\phi$ denote any formula in PIL.

\begin{syntax}
The first-order interaction logic (FOIL) is defined by the grammar:
\begin{align}
\Phi ::= \true\ |\ \phi\ |\ \compi \Phi |\ \Phi \vee \Phi\ |\ \exists c:T\big(Pr(c) \big).\Phi
\,,
\end{align}
where $T$ is a component type, which represents a set of component instances with identical interfaces and behaviour. Variable $c$ ranges over component instances and must occur in the scope of a quantifier. $Pr(c)$ is some set-theoretic predicate on $c$ (omitted when $Pr =true$).
\end{syntax}

Additionally, we define the usual notation for the universal quantifier:
\[\forall c\oftype T \bigl(\Pr(c)\bigr). \Phi \bydef{=} \not \exists c\oftype T \bigl(\Pr(c)\bigr). \compi \Phi.\]

\begin{semantics}
The semantics is defined for closed formulas, where, for each variable in the formula, there is a quantifier over this variable in a higher nesting level. We assume that the finite set of component types $\cT = \{T_1,\dots,T_n\}$ is given. Models are pairs $\langle \mathcal{B}, \gamma\rangle$, where $\mathcal{B}$ is a set of component instances of types from $\cT$ and $\gamma$ is a configuration on the set of ports $P$ of these components. For quantifier-free formulas, the semantics is the same as for PIL formulas. For formulas with quantifiers, the satisfaction relation is defined as follows:
\begin{align*}
\langle \mathcal{B},\gamma \rangle &\models \exists c:T \big( Pr(c) \big).\Phi\,,
&&\text{iff  $\gamma \models \bigvee_{c':T \in B \wedge Pr(c')} \Phi[c'/c]$,}
\end{align*}
where $c':T$ ranges over all component instances of type $T \in \cT$ and $\Phi[c'/c]$ is obtained by replacing all occurrences of $c$ in $\Phi$ by $c'$.
\end{semantics}

\begin{figure}
  \centering
  \includegraphics[scale=1]{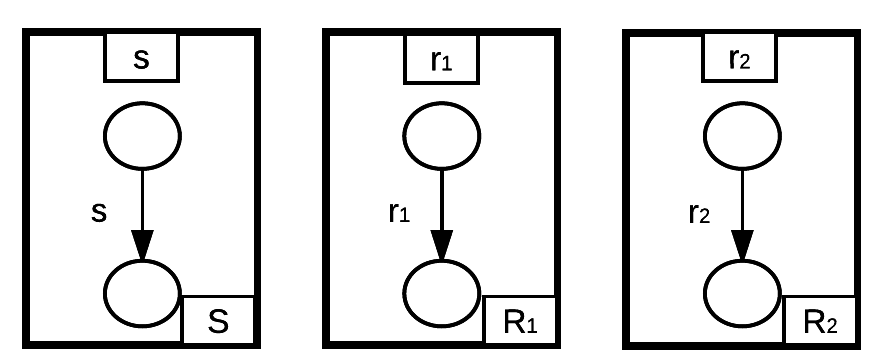}
  \caption{Three BIP components}
  \label{fig:bgexample}
\end{figure}

\begin{example}
\label{ex:bgexample}
Consider three components shown in \fig{bgexample}: a sender $S$ and two receivers $R_1,\ R_2$. The sender has a port $s$ for sending messages and each receiver has a port $r_i\ (i=1,2)$ for receiving them. We consider the following four coordination schemes:
\begin{itemize}
\item {\em Rendezvous} ensures strong synchronization between $S$ and all $R_i$. Rendezvous is specified by a single interaction involving all ports represented by the monomial $sr_1r_2$. This interaction can occur only if all of the components are in states enabling transitions labelled, respectively, by $s$, $r_1$ and $r_2$.
\item {\em Broadcast} allows all interactions involving $S$ and any (possible empty) subset of $R_i$. Broadcast is represented by the formula $s$, which can be expanded to $s\compi{r_1}\,\!\compi{r_2} \vee sr_1\compi{r_2} \vee s\compi{r_1}\!r_2 \vee sr_1r_2$. These interactions can only occur if $S$ is in a state enabling $s$. Each $R_i$ participates in an interaction only if it is in a state enabling $r_i$.
\item {\em Atomic broadcast} ensures that either all or none of the receivers are involved in the interaction. Atomic broadcast is be characterised by the formula $s\compi{r_1}\,\!\compi{r_2} \vee sr_1r_2$. The $s r_1 r_2$ interaction corresponds to a strong synchronization among the sender and all receivers.
\end{itemize} 
\end{example}

\begin{example}
\label{ex:javabip:star}
The Star architecture (\ex{starArchitecture}) can be expressed in FOIL for any number of components of type $S$ as follows:
\begin{center}
 $ \exists c \oftype C .\ \forall s \oftype S .\ (c.p\ s.q) \wedge\ \forall s' \oftype S\ (s \neq s').\ (\compi{s.q\ s'.q}) \wedge \forall c': C (c = c').\ true$.
\end{center}
\end{example}

\subsection{JavaBIP Require/Accept Macro-notation Based on FOIL}
\label{secn:model:macro}

JavaBIP relies on component types, rather than on component instances for the definition of synchronization constraints. All instances of a given component type are restricted with the same set of synchronization constraints. 

Consider a port $p$ of a component type $T$, which labels one or more transitions of $T$. The associated synchronization constraint to all transitions of $T$ labeled by $p$ is the conjunction of two constraints: the \emph{causal} and \emph{acceptance} constraints. Similarly to~\cite{bozga2012modeling}, two macros are used: 1)~the \textbf{Require} macro and 2)~the \textbf{Accept} macro to define the causal and acceptance constraints, respectively. Next, we describe the meaning of the two macros through examples. 

\subsubsection{The Require Macro} is used to specify ports required for synchronization. Let $T^1, T^2 \in \cT$ be two component types. The following:
\begin{align*}
  T_1.p \require T_2.q 
  \enskip \equiv \enskip
  \forall c_1 \oftype T_1.\ \exists c_2 \oftype T_2.\ \forall c_3 \oftype T_2\ (c_2 \ne c_3).\ \big( c_1.p\ \Rightarrow\ c_2.q\ \non{c_3.q} \big)\,,
\end{align*}

means that, to participate in an interaction, each of the ports $p$ of component instances of type $T_1$ requires synchronization with \emph{precisely one} of the ports $q$ of component instances of type $T_2$. In comparison with~\cite{bozga2012modeling}, we have opted for a macro-notation where the cardinality is explicit: should two instances of the same port type be required, this is specified by explicitly putting the required port type twice:
\begin{align*}
  T_1.p \require T_2.q\ T_2.q
  \enskip \equiv \enskip
  \forall c_1 \oftype T_1.\ \exists c_2, c_3 \oftype T_2.\ \forall c_4 \oftype T_2\ (c_2 \ne c_3 \ne c_4).\\ \big( c_1.p\ \Rightarrow\ c_2.q\ c_3.q\ \non{c_4.q} \big)\,,
\end{align*}

and so on for higher cardinalities. We call \emph{effect} what is specified in the left-hand side of \require (\eg $T_1.p$) and \emph{cause} what is specified in the right-hand side (\eg $T_2.q\ T_2.q$). A cause consists of a set of \emph{OR-causes}, where each OR-cause is a set of ports. For $p$ to participate in an interaction, all the ports belonging to at least one of the OR-causes must synchronize. We define: 

\begin{align*}
  T_1.p \require &T_2.q\ T_2.q\ ;\ T_2.r
  \enskip \equiv \enskip
  \forall c_1 \oftype T_1.\\ &\Big(\exists c_2, c_3 \oftype T_2.\ \forall c_4 \oftype T_2\ (c_2 \ne c_3 \ne c_4).\ \big( c_1.p\ \Rightarrow\ c_2.q\ c_3.q\ \non{c_4.q} \big)\\ &\bigvee\ \exists c_2 \oftype T_2.\ \forall c_3 \oftype T_2 (c_2 \ne c_3).\ \big( c_1.p\ \Rightarrow\ c_2.r\ \non{c_3.r} \big)\ 
  \Big)\,,
\end{align*}

which means that $p$ requires either the synchronization of two instances of $q$ or one instance of $r$. Notice the semicolon that separates the two OR-causes.

\subsubsection{The Accept Macro} defines optional ports for synchronization, \ie it defines the boundary of interactions. This is expressed by explicitly excluding from interactions all the ports that are not optional. Let $T^1, T^2 \in \cT$ be two component types. The following:

\begin{align*}
	\label{eq:accept}
	 T_1.p \accept T_2.q  \quad
	\equiv \quad \forall c_1 \oftype T_1.\ \left(\bigwedge_{T.r \in \cT.P \setminus \{T_2.q\}}  \forall c \oftype T.\ (c_1.p \Rightarrow \non{c.r}) \right)\,,
\end{align*}

means that $p$ accepts he synchronization of instances of $q$ but does not accept instances of any other port types.

The generalization of the above definitions to more complex macros is straightforward, but cumbersome.  Therefore we omit it here.

\begin{example}
\label{ex:starMacros}
The synchronization constraints of the Star architecture (\ex{javabip:star}) are specified by the following combination of macros:
\begin{align*}
  \mathtt{S.q} &\require \mathtt{C.p}
  &\mathtt{S.q} &\accept \mathtt{C.p}
  \\
  \mathtt{C.p} &\require \mathtt{S.q}
  &\mathtt{C.p} &\accept \mathtt{S.q}
  &&,
\end{align*}
\end{example}

\begin{example}
\label{ex:general}
Let us now consider a more general example to illustrate the expressiveness of the synchronization constraints.  Assume that there are five component types \texttt{A, B, C, D, E} with port types \texttt{a, b, c, d, e}, respectively. Through the Require macros, we enforce the following five constraints: 1)~\texttt{A.a} requires synchronization with two instances of \texttt{B.b}; 2)~\texttt{B.b} requires synchronization either with a)~a single instance of \texttt{A.a} and a single instance of \texttt{C.c} or b)~just two instances of \texttt{C.c}; 3)~\texttt{C.c} does not require synchronizations with other ports (however it accepts synchronisations with any possible combination of ports \texttt{A.a, B.b, C.c}); 4)~\texttt{D.d} requires synchronization with a single instance of \texttt{E.e} and 5)~\texttt{E.e} does not require synchronizations with other ports (however it accepts synchronisations with any number of ports \texttt{D.d}). Notice that by the combination of the first two require macros, a synchronisation involving exactly an instance of \texttt{A.a} and two instances of \texttt{B.b} is not allowed, since \texttt{B.b} requires at least one instance of \texttt{C.c} to also participate in the synchronisation.

{\tabsize
\begin{align*}
  \mathtt{A.a} &\require \mathtt{B.b}\ \mathtt{B.b}
  &\mathtt{A.a} &\accept \mathtt{A.a}\ \mathtt{B.b}\ \mathtt{C.c}
  \\
  \mathtt{B.b} &\require \mathtt{A.a}\  \mathtt{C.c}\  ;\ \mathtt{C.c}\ \mathtt{C.c}
  &\mathtt{B.b} &\accept \mathtt{A.a}\ \mathtt{B.b}\ \mathtt{C.c}
  \\
 \mathtt{C.c} &\require  -
  &\mathtt{C.c} &\accept \mathtt{A.a}\  \mathtt{B.b}\ \mathtt{C.c}
  \\
    \mathtt{D.d} &\require \mathtt{E.e}
  &\mathtt{D.d} &\accept \mathtt{E.e}
  \\
    \mathtt{E.e} &\require -
  &\mathtt{E.e} &\accept \mathtt{D.d}
\end{align*}
}%
\label{generalExampleGlue}
\end{example}

\section{Defining System Validity}
\label{secn:validity}

In the previous, static JavaBIP implementation, a developer would first register all components to the engine and then start the engine manually. Since, in the dynamic JavaBIP implementation, components may register or deregister on the fly, we need a notion of validity so that depending on whether there are enough registered components, the engine can automatically start or stop its execution. We start by formally defining atomic components, BIP systems and valid BIP systems.

\begin{definition} [Component] 
\label{defn:atomicComponent}
A component $B$ is a Finite State Machine represented by a triple $(Q, P, \goesto[])$, where $Q$ is a set of states, $P$ is a set of communication ports, $\goesto[] \subseteq Q \times P \times Q$ is a set of transitions, each labeled by a port.
\end{definition}

Below, we use the common notation, writing $q \goesto[p] q'$
instead of $(q,p,q') \in\, \goesto{}$.

\begin{definition} [BIP System]
\label{defn:system}
A BIP system is defined by a composition operator parameterized by a set of interactions $\gamma \subseteq 2^P$. $\cB_n = \gamma(B_1,\dots,B_n)$ is a Finite State Machine $(Q,\gamma,\goesto[])$, where $Q = \prod_{i=1}^n Q_i$ and $\goesto[]$ is the least set of transitions satisfying the following rule:
\[ \frac{a = \{p_i\}_{i \in I} \in \gamma\qquad \forall i \in I: q_i \goesto[p_i] q_i'\qquad \forall i \notin I : q_i = q'_i}{(q_1, \dots, q_n) \goesto[a] (q'_1, \dots, q'_n)} \]
\end{definition}

The inference rule says that a BIP system, consisting of $n$ components, can execute an interaction $a \in \gamma$, iff for each port $p_i \in a$, the corresponding component $B_i$, can execute a transition labeled with $p_i$; the states of components that do not participate in the interaction remain the same. The set of possible interactions of a BIP system is defined in JavaBIP by the glue specification, \ie the set of Require and Accept macros.
 We write $B \oftype T$ to denote a component $B$ of type $T$.  We denote by $\cT$ the set of all component types of a BIP system.

\Defn{validitySystem} extends \defn{system} to describe a valid BIP system. System validity is defined from the perspective of starting or stopping the JavaBIP engine, which orchestrates component interaction (synchronization of component actions). Notice that even if a system is not valid according to \defn{validitySystem}, JavaBIP components can be communicating in an asynchronous manner.

\begin{definition} [Valid BIP System]
\label{defn:validitySystem}
A BIP system $(Q,\gamma,\goesto[])$ is valid iff $\gamma \neq \emptyset$. \end{definition}

\begin{remark}
  In \defn{atomicComponent} and \defn{system}, for the sake of simplicity, we omit the presentation of data-related aspects. However, it should be noted that the full JavaBIP~\cite{SPE:SPE2495} allows data variables within components. In such cases, component transitions can be guarded by Boolean predicates on data variables. Notice that in \defn{validitySystem} we do not consider guards. This is a design choice that we made. The result of guard evaluation might easily change multiple times throughout the system lifecycle, \eg based on the components internal state or on component interaction. Thus, it is undesirable to base engine execution on such often recurring changes, which could actually result in increasing the engine's overhead.
\end{remark}

\Defn{validitySystem} says that a BIP system is valid if and only if there are enough registered components such that the interaction set of the system is not empty. %Notice that \defn{validitySystem} does not consider transition guard evaluation. 
To determine the validity of a system, we use directed graphs with edge coloring to model dependencies among components. The generation of the validity graph is based on the Require macros of the glue specification, since these define the minimum number of required interactions among the components. The complete glue specification is used by the engine for orchestrating component execution.. 

\begin{definition} [Validity graph]
\label{defn:validityGraph}
A labelled graph $G = (\cT, E, c)$ is the validity graph of a set of Require macros iff:
\begin{enumerate}
\item~the vertex set $\cT$ is the set of component types defined in the Require macros;
\item~the edge set $E$ contains a directed edge $(T_1,T_2)$ iff there exists a Require macro that contains $T_1$ in the effect and $T_2$ in an OR-cause;
\item for each edge $(T_1,T_2) \in E$, the counter $c: E \rightarrow \mathbb{Z}$ is initialized with the cardinality of $T_2$ in the corresponding OR-cause. 
\end{enumerate}
The edges of the graph are colored such that: 1) all edges corresponding to an OR-cause of a Require macro are colored the same; 2) edges corresponding to different OR-causes are colored differently.  \end{definition}

Clearly, there always exists a validity graph for any set of Require macros. Note that the outgoing edges of two different vertices may have the same color.

\begin{wrapfigure}{r}{0.4\textwidth}
    \centering
    \includegraphics[width=0.27\textwidth]{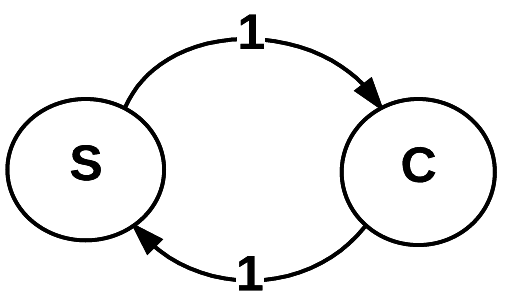}
    \caption{Validity graph of \ex{starMacros}.}
\label{fig:starValidity}
\end{wrapfigure}

\fig{starValidity} shows the validity graph of the Star architecture defined in the glue specification of \ex{starMacros}. The two vertices represent the two component types \texttt{S, D} of the example. Since, component type \texttt{S} requires synchronization with exactly one instance of component type \texttt{C} (cardinality = $1$), there is an edge from vertex \texttt{S} to vertex \texttt{C} labeled by a counter initialised to $1$. Analogously, since, component type \texttt{C} requires synchronization with exactly one instance of component type \texttt{S} (cardinality = $1$), there is an edge from vertex \texttt{C} to vertex \texttt{S} labeled by a counter initialised to $1$.

\begin{wrapfigure}{r}{0.4\textwidth}
    \centering
    \includegraphics[width=0.4\textwidth]{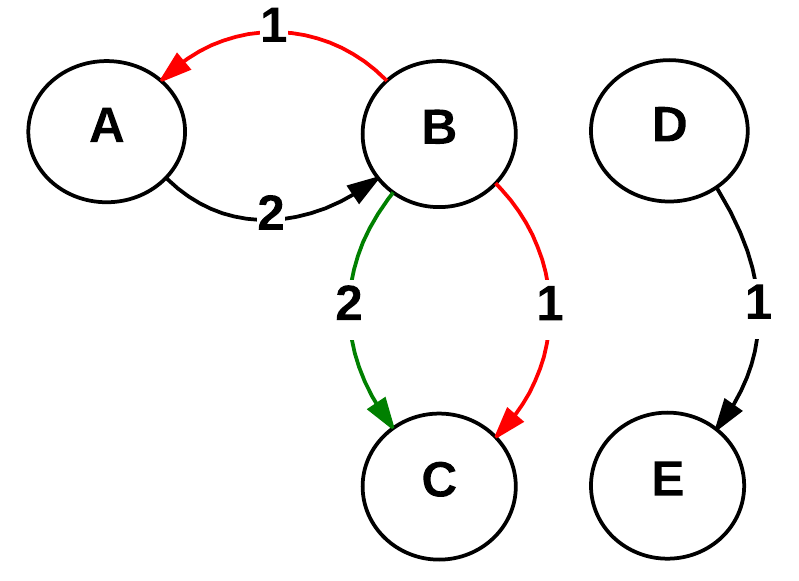}
    \caption{Validity graph of \ex{general}.}
\label{fig:example_system}
\end{wrapfigure}

\fig{example_system} shows the validity graph that corresponds to the glue specification of \ex{general}. There are five vertices \texttt{A, B, C, D, E} each representing a component type of the example. Let us consider the second and third Require macros of the example. In the second Require macro notice the two OR-causes, each represented by edges with different colors (red and green) in \fig{example_system}. In the first OR-cause, \texttt{B} requires one instance of \texttt{A} and one instance of \texttt{C}, represented by the red edges in \fig{example_system} both labeled by $1$ since both cardinalities are equal to $1$. In the second OR-cause, \texttt{B} requires two instances of \texttt{C}, represented by the green edge in \fig{example_system} labeled by $2$. In the third Require macro of the example, \texttt{C} does not require synchronization with any other component of the system, and thus, there is no outgoing edge from vertex \texttt{C}.

\begin{figure} [t]
    \centering
    \includegraphics[width=\textwidth]{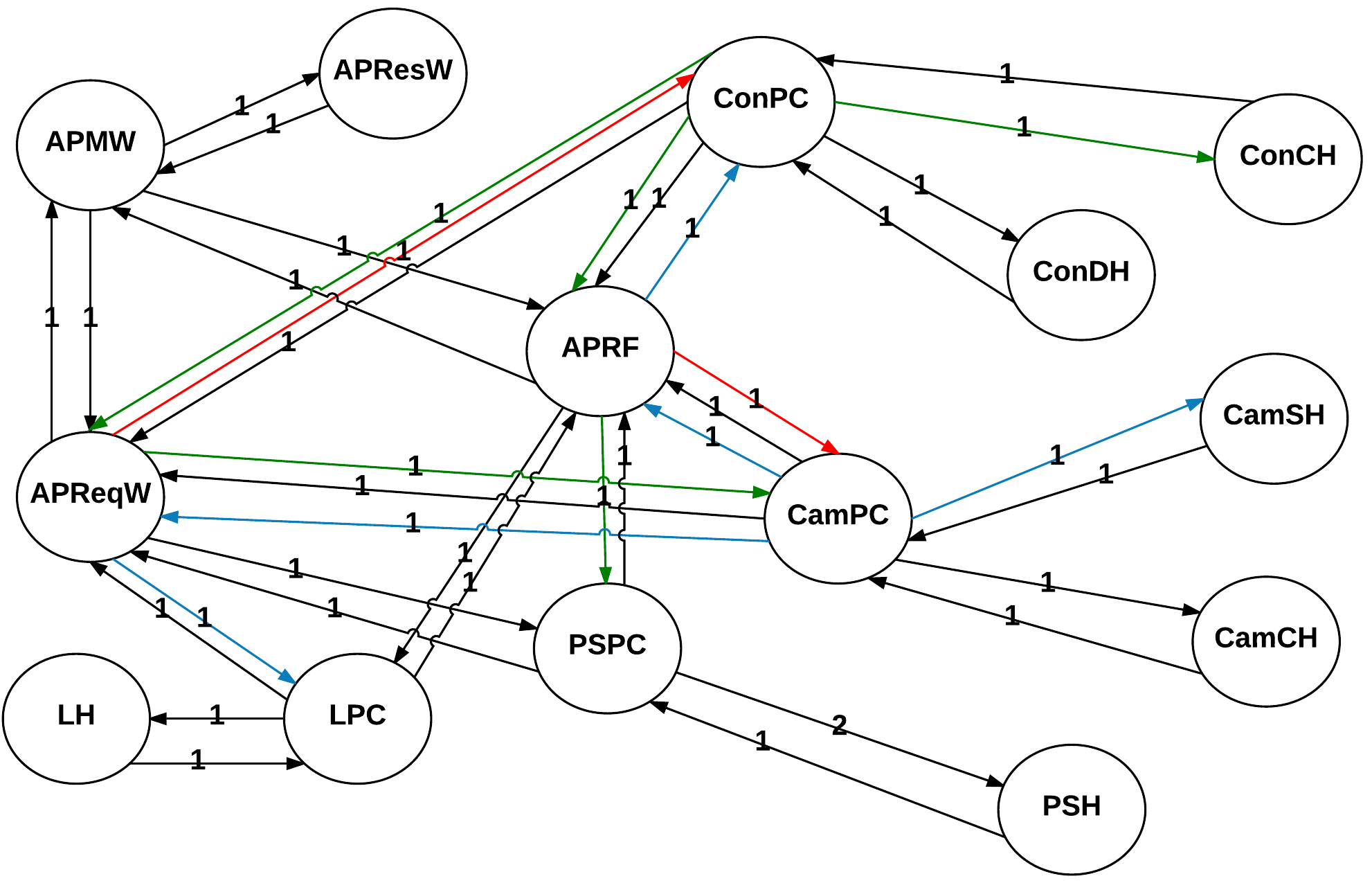}
    \caption{Validity graph of the Modular Phone case study.}
\label{fig:phoneValidity}
\end{figure}

\fig{phoneValidity} shows  the validity graph of the modular phone case study.  The are $13$ vertices, each one representing an atomic component type of the case study (\fig{componentization}). These are the following: \texttt{AP Request Worker}, \texttt{AP Response Worker}, \texttt{AP Message Worker}, \texttt{AP Receiver Fifo}, \texttt{Control Protocol Controller}, \texttt{Log Protocol Controller}, \texttt{Camera Protocol Controller}, \texttt{Power Supply Protocol Controller}, \texttt{Control Connect Handler}, \texttt{Control Disconnect Handler}, \texttt{Log Handler}, \texttt{Camera Capture Handler}, \texttt{Camera Stream Handler}, and \texttt{Power Supply Handler}. Due to space limitations, we have substituted the full names with their acronyms. For instance, we have substituted \texttt{AP Message Handler} by \texttt{APMW}. In case of acronym conflicts, we added more letters, \eg we have substituted \texttt{AP Request Worker} by \texttt{APReqW}, and \texttt{AP Response Worker} by \texttt{APResW}.

Let us now consider two of the Require macros of the case study (the full set of Require and Accept macros can be found in the Appendix).\\

\texttt{PSPC.rcvFromDri} \require \texttt{PSH.sndRes} \texttt{PSH.sndRes}

\texttt{APRF.add} \require \texttt{ConPC.snd}; \texttt{CamPC.snd}; \texttt{PSPC.snd}; \texttt{LPC.snd}\\

 Since, component type \texttt{Power Supply Protocol Controller} requires synchronization with two instances of component type \texttt{Power Supply Handler}, there is an edge from vertex \texttt{PSPC} to vertex \texttt{PSH} labeled by a counter initialized to $2$. Furthermore, component type \texttt{AP Receiver Fifo} requires synchronization either with an instance of component type \texttt{Control Protocol Controller} or an instance of component type \texttt{Camera Protocol Controller} or an instance of \texttt{Power Supply Protocol Controller} or an instance of \texttt{Logger Protocol Controller}. Thus, there are four outgoing edges from vertex \texttt{APRF}, each labeled by a counter initialized to $1$ and colored by a different color, to vertices \texttt{ConPC}, \texttt{CamPC}, \texttt{PSPC}, and \texttt{LPC}, respectively. In \fig{phoneValidity}, edges with different colors  are also represented by different line styles.

\begin{definition} [Dynamic change in validity graph]
\label{defn:dynamicChange}
In the event of a dynamic change, a validity graph is updated as follows:
\begin{enumerate}
\item If a component instance of type \texttt{T} is registered, the counters of all incoming edges of vertex \texttt{T} are decremented by $1$.
\item If a component instance of type \texttt{T} is deregistered or paused, the counters of all incoming edges of vertex
 \texttt{T} are incremented by $1$.
\end{enumerate}
\end{definition}

\begin{proposition} [Determining system validity]
\label{prop:validity}
Consider a BIP system and a corresponding validity graph. The BIP system is valid iff for at least one vertex of the validity graph, an instance of the vertex's corresponding type is registered and the counters of all outgoing edges of at least one color are equal to or less than $0$.
\end{proposition}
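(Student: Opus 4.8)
The plan is to unwind the two definitions involved and reduce the statement to a counting invariant on the edge counters, then prove each direction of the equivalence separately. By \defn{validitySystem} the system is valid exactly when $\gamma \neq \emptyset$, and $\gamma$ is the set of interactions satisfying the glue, so a non-empty interaction $a$ lies in $\gamma$ iff, for every port instance $c.p \in a$ (with $c \oftype T$), at least one OR-cause of the Require macro of $T.p$ is fully contained in $a$, subject to the Accept constraints that bound $a$. The first task is therefore to translate the graph-level condition into a statement about which OR-causes are satisfiable by the currently registered components.

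First I would establish the \emph{counter invariant}: after any sequence of dynamic changes as in \defn{dynamicChange}, the counter $c(T_1,T_2)$ on a colored edge equals the cardinality of $T_2$ in the corresponding OR-cause minus the number of currently registered instances of $T_2$. This follows by induction on the sequence of register / deregister (pause) events, since the counter is initialized to the cardinality (\defn{validityGraph}, item 3) when no components are registered, and each registration (respectively, deregistration) of a $T_2$-instance decrements (respectively, increments) precisely the incoming edges of vertex $T_2$ by $1$. Consequently, all outgoing edges of one color of a vertex $T_1$ are $\le 0$ iff, for that OR-cause, each required type has at least as many registered instances as its cardinality, that is, the OR-cause is \emph{satisfiable} by the registered components (a vertex with an empty Require, hence no outgoing edges, counting as trivially satisfiable).

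For the direction valid $\Rightarrow$ condition, I would take any non-empty $a \in \gamma$ and pick a port $c_1.p \in a$ with $c_1 \oftype T_1$. Since $a$ satisfies the Require macro of $T_1.p$, one of its OR-causes is fully present in $a$; every cause port in it is carried by a registered instance, so each cause type has enough registered instances and, by the counter invariant, all edges of that color are $\le 0$, while $c_1$ witnesses that an instance of $T_1$ is registered. For the converse, given a vertex $T_1$ with a registered instance $c_1$ and a color all of whose edges are $\le 0$, I would assemble a candidate interaction by adjoining to $c_1.p$ one registered instance of each cause port with the prescribed multiplicities; the counter invariant guarantees these instances exist, and by construction the candidate satisfies the Require constraint of $T_1.p$.

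The main obstacle is precisely this last construction: satisfying the Require macro of the \emph{initiating} port $T_1.p$ does not by itself place the candidate in $\gamma$, because the cause ports carry their own Require (and Accept) constraints that must hold on the \emph{same} interaction. The argument must therefore close the candidate under the Require relation \mdash repeatedly adjoining the ports demanded by newly added cause ports \mdash and show that this process terminates with an interaction all of whose ports have a satisfied OR-cause and whose Accept bounds are respected. I expect this to be the delicate point, since a cause port could in principle require a type with too few registered instances; the cleanest route is to argue that, for the glue specifications considered, the colored OR-cause reachable from $T_1$ is self-contained among the registered types (as it is in the symmetric macros of \ex{starMacros} and \ex{general}), so that the closure succeeds and yields a concrete element of $\gamma$, establishing $\gamma \neq \emptyset$.
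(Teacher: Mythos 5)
Your counter invariant and your necessity argument are, modulo the added rigor of the explicit induction, the same as the paper's: the paper likewise picks $a \in \gamma$, extracts a Require macro whose effect is the type of some port instance in $a$ and whose satisfied OR-cause $OR_p$ lies inside $a$, and reads off from \defn{validityGraph} and \defn{dynamicChange} that the same-colored outgoing edges all have counters $\le 0$ while an instance of the effect's type is registered. The invariant itself (counter $=$ cardinality minus the number of registered instances of the target type) is left implicit in the paper, so stating and proving it explicitly is a genuine, if small, improvement in rigor rather than a different route.

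The divergence is in the sufficiency direction, and there your hesitation points at a real hole --- one that the paper's own proof does not close. The paper disposes of sufficiency in one sentence: since enough instances of every type in $OR_c$ are registered and the effect's type has a registered instance, ``there is an enabled interaction involving component instances corresponding to the effect and the elements of $OR_c$.'' It never verifies that the cause instances' own Require macros (nor anyone's Accept macros) are satisfied on that same interaction, which is exactly the obstacle you identify. The obstacle is not vacuous: take the glue $\mathtt{A.a} \require \mathtt{B.b}$, $\mathtt{B.b} \require \mathtt{C.c}$, $\mathtt{C.c} \require -$ (with matching Accepts), and register one instance of $\mathtt{A}$, one of $\mathtt{B}$, and none of $\mathtt{C}$. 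Vertex $\mathtt{A}$ then has a registered instance and its single outgoing edge has counter $1-1=0$, yet every candidate interaction over the registered ports violates the Require macro of $\mathtt{B.b}$, so $\gamma = \emptyset$ and the system is invalid; the stated equivalence thus fails without an extra hypothesis on the glue. Your closure-under-Require construction, together with the self-containedness assumption you introduce to make it terminate inside the registered types, is precisely the kind of condition needed to repair the claim. The price is that you prove a statement weaker than the proposition as written; but the paper, which silently asserts the interaction into existence, does not actually prove the stronger one either, so your version is the more honest of the two.
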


\begin{proof}
Necessity: Since $\gamma \neq \emptyset$, there exists at least one interaction $a \in \gamma$. This means that there exists a \texttt{Require} macro such that the effect is the type of a port instance $p \in a$ and the causes contain an OR-cause $OR_p$ that consists of a subset of the types of port instances in $a$. We denote this subset by $Q$. For each port type $q \in Q$, there exists at least one instance of $q \in a$. The cardinality of each port type in $OR_p$ is equal to the number of corresponding instances in $a$. By Definition \ref{defn:validityGraph}, there exists a vertex $v$ in the validity graph that corresponds to the component type that contains $p$ and a set of outgoing edges. Let us denote $V_p$ the subset of outgoing edges of $v$ that correspond to $OR_p$, which are colored the same. Since $a \in \gamma$, this implies that all counters of $V_p$ are equal to or less than $0$ and there is at least a registered instance of the component type that contains $p$.

Sufficiency: We know that there exists a vertex with a registered component instance and a color such that the counters of all the outgoing edges of that color are equal to or less than $0$. Consider a \texttt{Require} macro with an OR-cause $OR_c$ that corresponds to this color. Since all the counters of edges of this color are equal to $0$, we know that there exists a sufficient number of registered instances of every type in $OR_c$ . Since the effect of the \texttt{Require} macro is also registered, there is an enabled interaction involving component instances corresponding to the effect and the elements of $OR_c$ . This implies that the system is valid, which concludes the proof. 
\end{proof}

\begin{example}
Let us now consider a system with the glue specification of \ex{general}. \fig{dynamicChanges} present the changes of the corresponding validity graph (see \fig{example_system}) when 1)~an instance of \texttt{B} is registered; 2)~an instance of \texttt{E} is registered; 3)~an instance of \texttt{C} is registered and 4)~the instance of \texttt{E} is deregistered. Notice that the system becomes valid when an instance of \texttt{E} is registered and becomes invalid when the instance of \texttt{E} is deregistered. \end{example}

\begin{figure}[t]
    \centering
    \includegraphics[width=\textwidth]{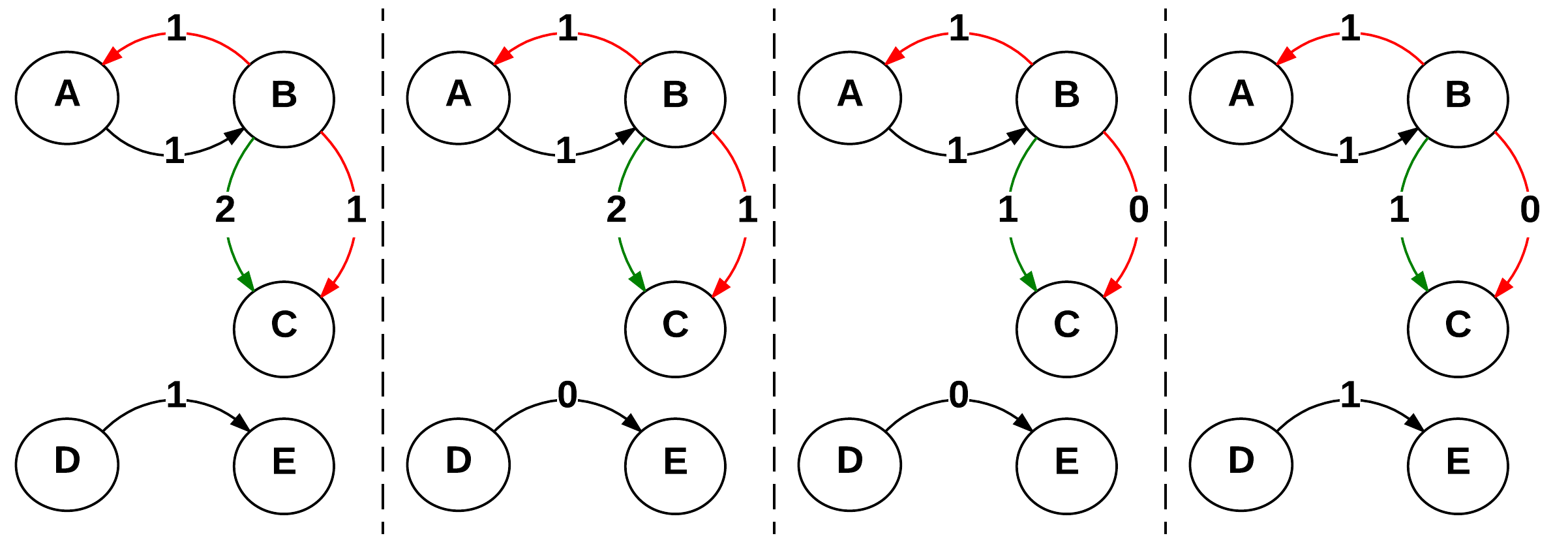}
    \caption{Changes in validity graph when adding/removing components.}
\label{fig:dynamicChanges}
\end{figure}

%\begin{example}
%\Fig{dynamicChanges} presents the changes in the validity graph of \fig{phoneValidity}, when at least one instance of each of the \texttt{APMW}, \texttt{APReqW}, \texttt{APResW}, \texttt{APRF}, and \texttt{PSH} component types are registered. Notice that the system is still not valid; to become valid it would require instances of all component types contained in at least one of the red dashed boxes to be registered. For instance, if a component instance of type \texttt{LPC} and a component instance of type \texttt{LH} get registered, the system becomes valid. The blue dotted boxes represent choice between components. Each blue dotted box contained in a single red dashed box corresponds to an OR-cause of the same Require macro. For instance, if a component instance of type \texttt{ConPC} get registered with a component instance of type \texttt{ConDH} or with a component instance of type\texttt{ConCH}, the system also becomes valid.
%\end{example}

%\begin{figure}[t]
%    \centering
%    \includegraphics[width=\textwidth]{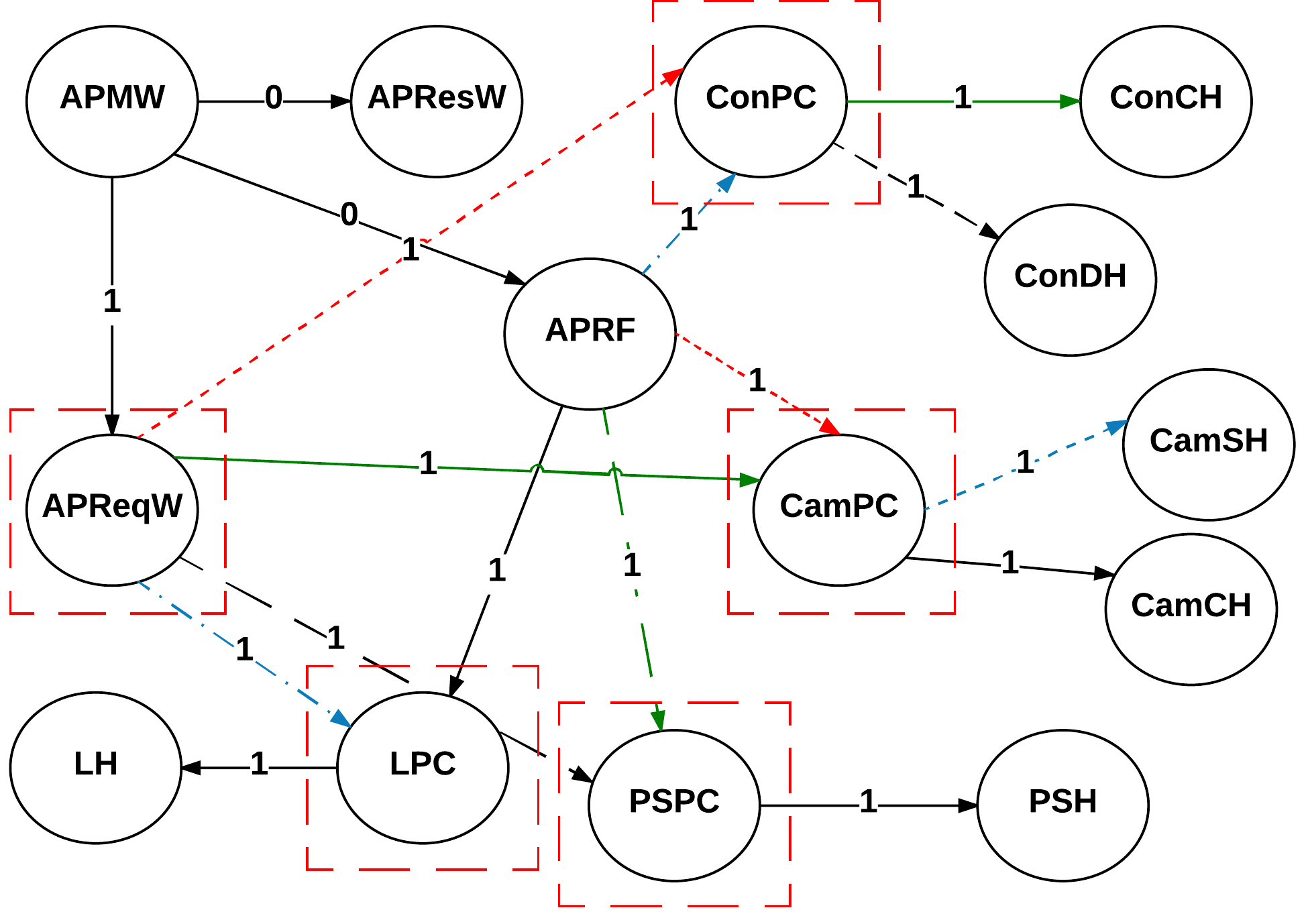}
%    \caption{Changes in validity graph of the case study.}
%\label{fig:dynamicChanges}
%\end{figure}

To start and stop the engine, we determine first whether the system is valid by using Prop.~\ref{prop:validity}. Nevertheless, we do not need to check system validity every time a component registers/deregisters/pauses. Corollaries~\ref{cor:valid} and \ref{cor:invalid} define such cases. 

\begin{corollary}
\label{cor:valid}
If a BIP system  $\cB_n$ is valid and a component is registered, then the new BIP system $\cB_{n+1}$ is also valid.
\end{corollary}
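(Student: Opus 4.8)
The plan is to reduce the whole statement to the characterisation of validity given in \prop{validity} and to exploit the fact that registration acts \emph{monotonically} on the edge counters. First I would invoke \prop{validity} in the forward direction: since $\cB_n$ is valid, there is a vertex $v$ of the validity graph whose corresponding component type has at least one registered instance, together with a color such that every outgoing edge of $v$ of that color carries a counter that is $\le 0$. This pair (the vertex $v$ and this particular color) is the witness I intend to keep fixed throughout the argument, rather than searching for a fresh one in $\cB_{n+1}$.

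Next I would analyse the effect of registering one further component, say of type $T'$, using \defn{dynamicChange}. The only counters that change are those of the incoming edges of the vertex $T'$, and each of them is \emph{decremented} by $1$; crucially, no counter is ever incremented and no previously registered instance is removed. This is the key monotonicity observation: registration can only make counters smaller and can only add registered instances.

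I would then verify that the witness from the first step survives the update. The type of $v$ still has a registered instance, because registration does not deregister anything. Each outgoing edge of $v$ in the witness color is either untouched (if its target is not $T'$) or has its counter decremented by $1$ (if its target is $T'$); in both cases a counter that was $\le 0$ before the update remains $\le 0$ afterwards. Hence the witness condition of \prop{validity} still holds for $\cB_{n+1}$, and applying \prop{validity} in the converse direction gives that $\cB_{n+1}$ is valid.

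There is essentially no hard step here: the entire argument rests on the sign-monotonicity of the counter update under registration. The only point requiring care is to confirm that the \emph{same} witness vertex and color remain valid, instead of re-deriving a witness from scratch. The boundary case in which $T'$ is a type not previously present needs no special treatment, since its vertex already exists in the graph (the vertices are exactly the component types named in the Require macros), and decrementing its incoming edges can only help satisfy the condition.
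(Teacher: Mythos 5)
Your proposal is correct and takes exactly the route the paper intends: the paper's own proof is the one-line remark that the corollary ``follows directly from Proposition~\ref{prop:validity} and Definition~\ref{defn:dynamicChange}'', and your argument is precisely the elaboration of that remark (fix the witness vertex and color from \prop{validity}, observe that registration only decrements counters by \defn{dynamicChange}, so the witness survives). No difference in approach, only in level of detail.
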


\begin{proof}
It follows directly from Proposition \ref{prop:validity} and Definition \ref{defn:dynamicChange}.

\end{proof}

\begin{corollary} 
\label{cor:invalid}
If a BIP system $\cB_n$ is invalid and a component is deregistered or paused, then the new BIP system $\cB_{n-1}$ is also invalid.
\end{corollary}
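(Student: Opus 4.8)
The plan is to prove the contrapositive, leveraging the monotonicity already captured by Corollary~\ref{cor:valid}. First I would observe that, at the level of the validity graph of \Defn{validityGraph}, passing from $\cB_n$ to $\cB_{n-1}$ by deregistering or pausing a component of some type $T$ is exactly the inverse of the registration operation: by \Defn{dynamicChange} it increments the counters of all incoming edges of vertex $T$ by $1$, undoing the decrement that registration performs, while the set of active instances can only shrink. Hence registering that same component back into $\cB_{n-1}$ reproduces precisely the counter values and registered-instance data of $\cB_n$. With this inverse relation in hand the argument is short: suppose, for contradiction, that $\cB_{n-1}$ is valid. Registering the deregistered/paused component into $\cB_{n-1}$ yields a system whose validity graph coincides with that of $\cB_n$; by Corollary~\ref{cor:valid} this system is valid, so $\cB_n$ is valid, contradicting the hypothesis. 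Therefore $\cB_{n-1}$ is invalid.

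Alternatively I would give a self-contained argument directly from Proposition~\ref{prop:validity}. Negating the proposition, $\cB_n$ invalid means that for every vertex $v$, either no instance of $v$'s type is registered, or for every color some outgoing edge of $v$ has counter strictly greater than $0$. Deregistering or pausing a $T$-instance increments the counters on edges of the form $(v,T)$ and leaves all other counters unchanged, while the set of active instances can only shrink. Both effects are monotone in the right direction: outgoing-edge counters never decrease, so the ``some edge of each color is positive'' disjunct is preserved; and an active instance is never created, so the ``no registered instance'' disjunct is preserved. Checking each vertex $v$ (the case $v=T$ being vacuous for the no-instance branch, since $T$ must have had an active instance to remove, forcing the positive-edge branch in $\cB_n$) shows that no vertex becomes a valid witness in $\cB_{n-1}$, so $\cB_{n-1}$ is invalid.

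The only delicate point, and the thing I would state explicitly, is that pausing and deregistering act identically on the validity graph. Deregistering clears all associated data whereas pausing retains it, but \Defn{dynamicChange} treats both by the same counter increment, and neither adds an active participant. Since the proof reads only the counter and active-instance state used by Proposition~\ref{prop:validity}, the two cases are handled uniformly and the retained data in the pausing case needs no separate treatment.
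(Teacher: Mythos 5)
Your proposal is correct. The paper's own proof is a single sentence citing Proposition~\ref{prop:validity} and Definition~\ref{defn:dynamicChange}, and your second, self-contained argument is exactly that derivation made explicit: negate the validity criterion, note that deregistering or pausing only increments counters and never adds an active instance, and check that no vertex can become a witness---including the one genuinely non-trivial case $v=T$, where the presence of the removed instance in $\cB_n$ forces the positive-counter branch of the negated criterion, which is then preserved because counters never decrease. Your first argument, the contrapositive via Corollary~\ref{cor:valid}, is a genuinely different route that the paper does not take: it reduces invalidity-preservation to the already-established validity-preservation, at the price of an auxiliary claim that re-inserting the removed component reconstructs exactly the counter and active-instance data of $\cB_n$. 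That claim does hold by Definition~\ref{defn:dynamicChange}, but note that for a \emph{paused} component the true inverse operation is resumption, which that definition never describes, so your uniform treatment of pausing and deregistration is doing real work in that branch. Either way the proof is complete, and your closing observation---that Proposition~\ref{prop:validity} reads only counters and active instances, so pause and deregister can be handled identically despite the engine retaining data for paused components---is precisely the detail the paper's one-line proof leaves implicit.
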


\begin{proof}
It follows directly from Proposition \ref{prop:validity} and Definition \ref{defn:dynamicChange}.
\end{proof}

\section{Implementation}
\label{secn:implementation}

Next, we discuss the implementation of the dynamic JavaBIP extension, during which the implementation of the JavaBIP engine has significantly changed. 
Let us consider first the interface of the JavaBIP engine, \ie \texttt{BIPEngine}. In the static implementation, \texttt{BIPEngine} consisted of the following functions: 1)~\texttt{register} used by a developer to register a component to the engine; 2)~\texttt{inform} used by a component to inform the engine of its current state and enabled transitions; 3)~\texttt{specifyGlue} used by a developer to send the glue specification to engine; 4)~\texttt{start} used by a developer to start the engine thread and 5)~\texttt{stop} used by a developer to stop the engine.

We updated \texttt{BIPEngine} as follows. Function \texttt{start} was removed, since the engine thread is now started automatically based on whether enough components are registered to form a valid system. We added two functions: 1)~\texttt{deregister} used by a developer or the component itself (\eg in the case of a failure) to deregister from the engine and 2)~\texttt{pause} used by a developer or the component (\eg in the case that the component is going to communicate asynchronously with other components for an amount of time) to pause synchronizations with other components. Function \texttt{register} was considerably updated, as well as function \texttt{stop} which can also be called internally by the engine in the case of an invalid system. The remaining functions were been updated. \Fig{architecture} shows the software architecture of the JavaBIP engine. The arrows labeled \texttt{register}, \texttt{deregister}, \texttt{stop}, \texttt{specifyGlue}, and \texttt{pause} represent calls to the \texttt{BIPEngine} functions.

\begin{figure}
    \centering
    \includegraphics[width=0.9\textwidth]{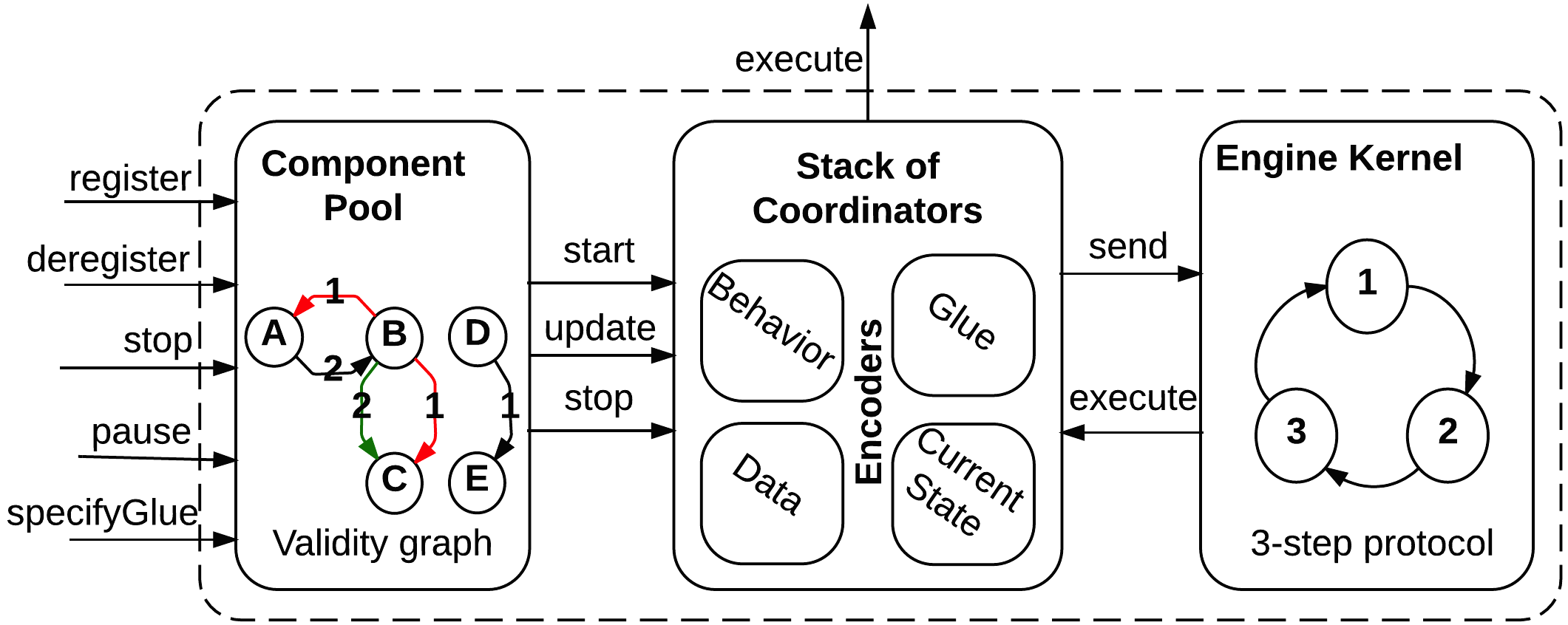}
    \caption{Dynamic JavaBIP Engine software architecture.}
\label{fig:architecture}
\end{figure}

The \texttt{ComponentPool} object was added, which is used as an interface to the validity graph described in \defn{validityGraph}. The \texttt{ComponentPool} starts the core engine (comprising a stack of coordinators and the engine kernel), when the system becomes valid, and stops it, when the system becomes invalid. System validity is checked whenever a component is registered, deregistered or paused, excluding the cases described in Cor.~\ref{cor:valid} and \ref{cor:invalid}. Whenever a component is registered or deregistered without affecting the validity of the system, the \texttt{Component Pool} sends an update registration/deregistration event to the core engine.

The engine composes and solves the various constraints of the system. Its implementation is based on Binary Decision Diagrams (BDDs)~\cite{bdd}, which are efficient data structures to store and manipulate Boolean formulas.\footnote{We have used the JavaBDD package, available at \url{http://javabdd.sourceforge.net}} The imposed constraints encode information about the behavior, glue, data, and current state of the components. Current state constraints allow us to compute the enabled transitions of the component. For each type of constraints, we discuss which parts must be recomputed when registering components at runtime. There is no need to recompute these constraints when a component is paused or deregistered. Whenever constraints are recomputed, the Coordinators send these to the kernel. 

The formulas that define the behavior, glue, data, and current state constraints were presented in~\cite{SPE:SPE2495}. \Fig{constraints} summarizes the constraint computation. The white color indicates that the constraint is computed only once at system initialization. The light gray indicates that the constraint is recomputed when a component is registered. The dark gray color indicates that the constraint is recomputed during each execution cycle.

\begin{figure}
    \centering
    \includegraphics[width=0.65\textwidth]{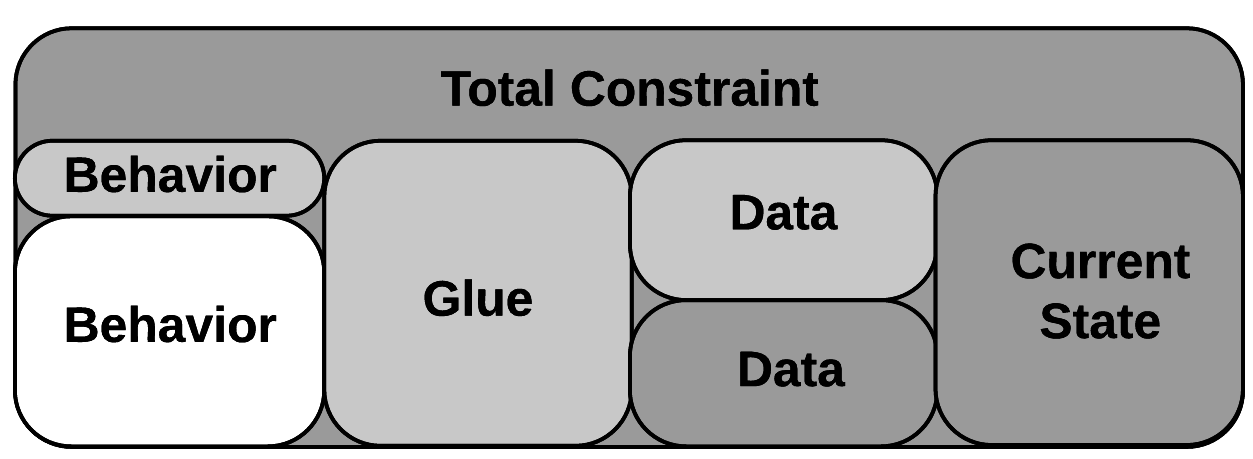}
    \caption{Constraint computation phases.}
\label{fig:constraints}
\end{figure}

The \emph{behavior constraint} of a component includes the ports and states of the component. For each port, a Boolean port variable is created. Similarly, for each state, a Boolean state variable is created. Behavior constraints are built using these port and state variables. %Let $Q_i$ and $P_i$ be the set of states and ports of a component $C_i$. Let $P_i^q=\setdef{p \in P_i}{ \exists\ q':\, q \goesto{p} q'}$. We compute the behavior BDD for component $C_i$ as follows:
%\[
%\mathit{BDD}^B_i = \bigvee_{q \in Q_i} \left( q \land \bigwedge_{q' \in Q_i\setminus\{q\}} \overline{q'} \land \bigvee_{p \in P^q_i} \left( p \land %\bigwedge_{p \notin P_i^q} \overline{p}   \right) \right) \vee \bigwedge_{p \in P_i} \overline{p}
%\,.
%\]
The \emph{total behavior constraint} is computed as the conjunction of all component behavior constraints. When a component is registered, its behavior constraint is computed and conjuncted to the total behavior constraint. When a component is deregistered, its port variables are set to $\mathit{false}$.
%
%\[
%\mathit{BDD}^B = \bigwedge_{i=1}^n \mathit{BDD}^B_i.
%\]

The \emph{glue constraint} is computed by interpreting the Require and Accept macros of the glue specification. The same Boolean port variables that were previously created for the behavior constraints are used for the glue constraint as well. The glue constraint must be recomputed, in a valid system, every time a new component is registered.  

For the \emph{data constraint}, additional data variables have to be created. The data constraints represent how data is exchanged among components, \ie which components are providing data and which components are consuming data. For each pair of components exchanging data, a data variable is created. When a component is registered, the data constraints that involve the newly arrived components are recomputed. Components exchange data at the beginning of each execution cycle of the system. Based on the exchanged data, components may disable some of the possible interactions. As a result, a subset of data constraints is recomputed at each execution cycle.

The \emph{current state constraint} of a component is computed when a component informs of its disabled transitions due to guard evaluation. %Ports can be disabled due to evaluation of guards. %Let $q$ be the current state and $P_i^{\mathit{dis}}$ be the set of disabled ports of a component $C_i$. Then, the current state BDD for component $C_i$ is computed as follows:
%\[
%\mathit{BDD}^Q_i = q \land \bigwedge_{q'\neq q,\, q' \in Q_i} \overline{q'} \land \bigwedge_{p \in P_i^{\mathit{dis}}} \overline{p}\,.
%\]
The \emph{total current state BDD} is the conjunction of the current state constraints of all registered components. During engine execution, \ie in a valid system, the total current state constraint is computed at each execution cycle of the engine and is further conjuncted with the total behavior constraint, the glue constraint, and the total data constraint. 

The execution of a JavaBIP valid system is driven by the engine kernel applying the following protocol in a cyclic manner: 
\begin{enumerate}
\item Upon reaching a state, all component constraints are sent to the kernel;
\item The kernel computes the \emph{total constraint}, which is the conjunction of the total behavior, glue, current state and data constraints. Thus, it computes the possible interactions satisfying the total system constraint and picks one of them;
\item The kernel notifies the Coordinators of its decision by calling \texttt{execute}, which then notify the components to execute the necessary transitions.
\end{enumerate}

Notice that a component can be registered during any step of the engine protocol. The engine, however will only include the newly registered component in the BDD computation at the beginning of the next cycle. System validity is checked, when a component is paused or deregistered. If the system remains valid and the engine is executing the second or third step of the engine protocol, the engine sets the port variables of this component to $\mathit{false}$ and recomputes the possible interactions.

\subsection{Performance Results}
\label{secn:performance}

We show performance results for the modular phone case study. The experiments were performed on a 3.1 GHz Intel Core i7 with 8GB RAM. We started with $5$ registered components and registered up to $45$ additional components. The JavaBIP models are available online\footnote{https://github.com/sbliudze/javabip-itest}. Table \ref{my-label} summarizes the engine's computation times and the BDD Manager peak memory usage for various numbers of components. We present and discuss three different engine times: 1) the time needed to perform a complete engine execution cycle (three-step protocol run by the Engine kernel); 2) the time needed to (partially) recompute the behavior, glue, and data BDD constraints due to the registration of a new component; 3) the time needed to add or remove a component from the component tool and check the validity of the system.

The first column shows the number of components in the system, after the registration or deregistration of a component. For instance, $10$ means that a new component was registered and the total number of components in the system is now $10$. The number of components is also decreased in two cases, when it is equal to $11$ and equal to $29$. This means that a component was deregistered or paused and the total number of components in the system is $11$ or $29$, respectively.  

The second column shows the average engine execution time of the first $1000$ engine cycles after a component registration or deregistration. The system becomes valid and the engine is started upon the registration of the $12^{\mathrm{th}}$ component. As a result, the engine execution times are equal to $0$ for the first two rows of the table. If the engine had been started, for instance, after the registration of the $5^{\mathrm{th}}$ component (without the system being valid), the engine would have needed $<1$~ms per execution cycle. This means that an overhead of seconds or minutes could have been added in the system's execution if more than a certain number of engine execution cycles (\eg $100000$) had been performed by the time the system became valid. 

 The third column of Table \ref{my-label} shows the amount of time needed to recompute the behavior, glue, and data constraints of the system due to a component registration. The first two rows are equal to $0$ since the system is invalid and thus, no BDD computation is required. 
If the engine had been started before the system became valid, the BDDs would have been recomputed upon the registration of each new component. For instance, after the registration of the $5^{\mathrm{th}}$ component, the engine would have needed $13$~ms and after the registration of the $11^{\mathrm{th}}$ component, the engine would have needed additional $49$~ms to recompute the BDDs. The fifth column shows the peak memory usage of the BDD manager after a component registration or deregistration.

Finally, the fourth column of Table \ref{my-label} presents the amount of time needed to add or remove a component from the component pool and check for system validity. The time needed is very low, in some cases even less than $1$ ms. These were the cases that system validity was not checked due to the results of Cor.~\ref{cor:valid} and \ref{cor:invalid}. The system became valid when the $12^{\mathrm{th}}$ component was registered. This required the maximum amount of time ($3.654$~ms), since the full graph was checked for validity, and then the core engine thread was started. Next, a component was deregistered, the system became invalid again, and the engine thread was stopped. The amount of time needed by the component pool was $2.908$~ms.

\begin{table}[]
\centering
\caption{Engine times and BDD Manager peak memory usage. Times are in milliseconds and memory usage is in Megabytes.}
\label{my-label}
\begin{tabular}{c|c|c|c|c}
\hline
\begin{tabular}[c]{@{}c@{}}Number of\\ components\end{tabular} & \begin{tabular}[c]{@{}c@{}}Time: Engine\\ execution cycle\end{tabular} & \begin{tabular}[c]{@{}c@{}}Time: BDD\\ (re)computation\end{tabular} & \begin{tabular}[c]{@{}c@{}}Time:\\ Component pool\end{tabular} & Memory \\ \hline
5                                                          & 0                                                            & 0                                                               & 2.078                                                   & 0  \\
10                                                         & 0                                                            & 0                                                                & 2.186                                                   & 0  \\
12                                                         & \textless 1                                                            & 63                                                                & 3.654                                                              & 0.059  \\
11                                                         & 0                                                            & 0                                                                 & 2.908                                                              & 0.057  \\
20                                                         & \textless 1                                                            & 151                                                               & \textless 1                                                    & 0.083  \\
25                                                         & 1.149                                                                  & 194                                                               & \textless 1                                                    & 0.099  \\
30                                                         & 1.247                                                                  & 239                                                               & \textless 1                                                    & 0.129  \\
29                                                         & 1.241                                                                  & 0                                                                 & 2.451                                                             & 0.121  \\
40                                                         & 1.399                                                                  & 283                                                               & \textless 1                                                    & 0.199  \\
50                                                         & 1.896                                                                  & 337                                                               & \textless 1                                                    & 0.254  \\ \hline
\end{tabular}
\end{table}

\section{Related Work}
\label{secn:related}

Dynamicity in BIP has been studied by several authors~\cite{bozga2012modeling,bruni14-bip-dynamics,Giusto2011}. In~\cite{bozga2012modeling}, the authors present the Dy-BIP framework that allows dynamic reconfiguration of connectors among the ports of the system.  They use {\em history variables} to allow sequences of interactions with the same instance of a given component type. JavaBIP can emulate history variables using data. In contrast, our focus is on dynamicity due to the creation and deletion of components that is often encountered in modern software systems that are not restricted to the embedded systems domain. Additionally, the interface-based design and the modular software architecture of JavaBIP allow us to easily extend the JavaBIP implementation.

Our approach is closest to~\cite{bruni14-bip-dynamics} and~\cite{edelmann17-FunctBIP}.  In~\cite{bruni14-bip-dynamics}, two extensions of the BIP model are defined: reconfigurable\mdash similar to Dy-BIP\mdash and dynamic, allowing dynamic replication of components.  They focus on the operational semantics of the two extensions and their properties, by studying their encodability in BIP and Place/Transition Petri nets (P/T Nets).  Composition is defined through interaction models, without considering structured connectors. In contrast, our work focuses mostly on the connectivity among components, defined by Require/Accept relations.
In~\cite{edelmann17-FunctBIP}, the BIP coordination mechanisms are implemented by a set of connector combinators in Haskell and Scala. Functional BIP provides combinators for managing connections in a dynamically evolving set of components.  However, as in~\cite{bruni14-bip-dynamics}, such evolution must be managed by explicit actions of existing components.  In contrast, the JavaBIP approach allows components to be created independently, only requiring that they be subsequently registered with the JavaBIP engine.

The Reo coordination language~\cite{Papadopoulos2001}\mdash
which realizes component coordination through circuit-like
connectors built of channels and nodes\mdash provides
dedicated primitives for reconfiguring connectors by
creating new channels ($\mathit{Ch}$), and manipulating
channel ends and nodes (\texttt{split}, \texttt{join},
\texttt{hide} and \texttt{forget}).  A number of papers
study reconfiguration of Reo connectors.  In particular,
\cite{Clarke08-ReconfLogic} provides a framework for model
checking reconfigurable circuits, whereas
\cite{Koehler08-dataflow} and \cite{Krause11-dyReo} take the
approach based on graph transformation techniques.  The main
difference between connector reconfiguration in Reo and
dynamicity in JavaBIP is that, in Reo, reconfiguration
operations are performed on constituent elements of the
connector.  Thus, in principle, such operations can affect
\emph{ongoing} interactions.  This is not possible in
JavaBIP, since interactions are completely atomic.

Three main types of formalisms have been studied in the literature for the specification of dynamic architectures and architecture styles~\cite{Bradbury:2004}: 1) graph grammars, 2) process algebras, and 3) logics. Graph grammars have been used to specify reconfiguration in a dynamic architecture through the use of graph rewriting rules. Representative approaches include the Le M\'etayer approach~\cite{le1998describing}, where nodes plus CSP-like behavior specifications are components and edges are connectors. A different way of representing software architectures with graph grammars can be found in~\cite{hirsch1998graph}, where hyperedges with CCS labels are components and nodes are communication ports. Other graph-based approaches are summarized in~\cite{BRUNI200839}. None of these approaches offers tool support. 

Additionally, process algebras have been used to define dynamic architectures as part of several architecture description languages (ADLs). For instance, $\pi$-calculus~\cite{milner1992calculus} was used in Darwin~\cite{magee1996dynamic} and LEDA~\cite{canal1999specification}, CCS was used in PiLar~\cite{cuesta2001dynamic}, and CSP was used in Dynamic Wright~\cite{allen1998specifying}. In comparison with our approach, Darwin and PiLar support only binary bindings (connectors), while in Dynamic Wright and LEDA there is no clear distinction between behavior and coordination since connectors can have behavior. 
%
%In~\cite{Giusto2011}, the authors revisit the BIP expressiveness, by introducing simple behaviour, such as prefixing, in the BIP glue operators.  They show that such minor modifications can rapidly lead to Turing completeness of glue.  In contrast to~\cite{Giusto2011} and other frameworks, such as~\cite{inverardi1995formal,le1998describing}, JavaBIP relies on a clear distinction between behavior and coordination.

Logic has also been used for the specification of dynamic software architectures and architecture styles. Alloy's first-order logic~\cite{jackson2002alloy} was used in~\cite{georgiadis2002self} for the specification of dynamic architectures, while the Alloy Analyzer tool was used to analyze these specifications. JavaBIP specifications can also be analyzed~\cite{dfinder,esst4bip}, however, the main focus of JavaBIP is runtime coordination, which is not offered in~\cite{georgiadis2002self}. Configuration logics~\cite{mavridou2017configuration} were proposed for the specification of architecture styles, which however, in their current form do not capture dynamic change.

\section{Conclusion and Future Work}
\label{secn:conclusion}

We presented an extension of the JavaBIP framework for coordination of software components that can register, deregister and pause at runtime. To handle this type of dynamicity, JavaBIP uses a macro-notation based on first-order interaction logic that allows specifying synchronization constraints on component types. This way, a developer is not required to know the exact number of components that need to be coordinated when specifying the synchronization constraints of a system. Additionally, we introduced a notion of system validity, which we use to start and stop the JavaBIP engine automatically at runtime depending on whether there are enough registered components in the system so that there is at least one enabled synchronization among them. In the previous, static JavaBIP implementation, developers had to manually start and stop the engine. Starting and stopping the engine in an automatic way helps optimize JavaBIP performance since it eliminates the engine's overhead in the case of an invalid system.

JavaBIP implements the principles of the BIP component framework rooted in rigorous operational semantics. Notice, however, that none of the current BIP engine implementations can handle dynamic insertion and deletion of components at runtime. The functionality of pausing a component at runtime makes the implementation of the JavaBIP engine more incremental. In our previous, static implementation, the engine had to wait for all registered components to inform in each cycle before making any computations. As a result, a single component could introduce a long delay in the system execution. In the current implementation, when a component is paused, the engine does not wait for it to inform, but rather computes the set of enabled interaction in the system that involve only the non-paused components. JavaBIP is an open-source tool\footnote{github.com/sbliudze/javabip-core, github.com/sbliudze/javabip-engine}. 

In the future, we plan to work towards increasing the incrementality of the engine in the following way: the engine does not have to wait for all non-paused components to inform but rather checks whether there is an enabled interaction among the components that have already informed and orders its execution. To check the enableness of interactions we plan to reuse the notion of validity graphs introduced in this paper and extend it with additional information on component ports. Additionally, we plan on extending the engine functionality to handle registration of new component types and synchronization patterns.
%To be mentioned in future work: 1)~registering batches of components; 2)~registering new types and 3)~engine incrementality.

\begin{appendices}

\section{Complete glue specification of the modular phone case study}

The glue specification must be provided in an XML file.  Each constraint has two parts: \texttt{effect}  and \texttt{causes}.  The former defines the port to which the  constraint is associated\mdash intuitively, the effect is the firing  of a transition labeled by this port.  The latter lists the ports  that are necessary to ``cause'' the ``effect''.  For the  \texttt{require} constraints, all causes must be present; for the
 \texttt{accept} constraints, any (possibly empty) combination of the  causes is accepted. \texttt{require} constraints have a set of options for causes, \ie the OR-causes explained in Subsection 3.3.

\begin{lstlisting}[style=customxml]
<?xml version="1.0" encoding="UTF-8"?>
<glue>
  <accepts>
    <accept>
      <effect id="sndToDri" specType="Log_Protocol_Controller"/>
      <causes>
        <port id="rcvRes" specType="Log_Handler"/>
      </causes>
    </accept>
    <accept>
      <effect id="rcvFromDri" specType="Log_Protocol_Controller"/>
      <causes>
        <port id="send_log" specType="Log_Handler"/>
      </causes>
    </accept>
    <accept>
      <effect id="send" specType="Log_Protocol_Controller"/>
      <causes>
        <port id="add" specType="AP_ReceiverFifo"/>
      </causes>
    </accept>
    <accept>
      <effect id="rcvDriver" specType="Log_Protocol_Controller"/>
      <causes>
        <port id="send_log" specType="Log_Handler"/>
      </causes>
    </accept>
    <accept>
      <effect id="receive" specType="Log_Protocol_Controller"/>
      <causes>
        <port id="sndToController" specType="/f/e/Z"/>
      </causes>
    </accept>
    <accept>
      <effect id="add" specType="AP_ReceiverFifo"/>
      <causes>
        <port id="send" specType="Control_Protocol_Controller"/>
        <port id="send" specType="Camera_Protocol_Controller"/>
        <port id="send" specType="Log_Protocol_Controller"/>
        <port id="send" specType="Power_Supply_Protocol_Controller"/>
      </causes>
    </accept>
    <accept>
      <effect id="rm" specType="AP_ReceiverFifo"/>
      <causes>
        <port id="receive" specType="AP_MessageWorker"/>
      </causes>
    </accept>
    <accept>
      <effect id="sndRes" specType="AP_MessageWorker"/>
      <causes>
        <port id="getReq" specType="/f/e/Z"/>
      </causes>
    </accept>
    <accept>
      <effect id="receive" specType="AP_MessageWorker"/>
      <causes>
        <port id="rm" specType="AP_ReceiverFifo"/>
      </causes>
    </accept>
    <accept>
      <effect id="sndReq" specType="AP_MessageWorker"/>
      <causes>
        <port id="getRes" specType="AP_ResponseWorker"/>
      </causes>
    </accept>
    <accept>
      <effect id="sndToController" specType="AP_RequestWorker"/>
      <causes>
        <port id="receive" specType="Control_Protocol_Controller"/>
        <port id="receive" specType="Log_Protocol_Controller"/>
        <port id="receive" specType="Power_Supply_Protocol_Controller"/>
        <port id="receive" specType="Camera_Protocol_Controller"/>
      </causes>
    </accept>
    <accept>
      <effect id="getReq" specType="AP_RequestWorker"/>
      <causes>
        <port id="sndRes" specType="AP_MessageWorker"/>
      </causes>
    </accept>
    <accept>
      <effect id="getRes" specType="AP_ResponseWorker"/>
      <causes>
        <port id="sndReq" specType="AP_MessageWorker"/>
      </causes>
    </accept>
    <accept>
      <effect id="sndToDri" specType="Control_Protocol_Controller"/>
      <causes>
        <port id="rcvReq" specType="Control_Disconnect_Handler"/>
        <port id="rcvReq" specType="Control_Connect_Handler"/>
      </causes>
    </accept>
    <accept>
      <effect id="rcvFromDri" specType="Control_Protocol_Controller"/>
      <causes>
        <port id="sndRes" specType="Control_Connect_Handler"/>
        <port id="sndRes" specType="Control_Disconnect_Handler"/>
      </causes>
    </accept>
    <accept>
      <effect id="send" specType="Control_Protocol_Controller"/>
      <causes>
        <port id="add" specType="AP_ReceiverFifo"/>
      </causes>
    </accept>
    <accept>
      <effect id="rcvDriver" specType="Control_Protocol_Controller"/>
      <causes>
        <port id="sndRes" specType="Control_Connect_Handler"/>
        <port id="sndRes" specType="Control_Disconnect_Handler"/>
      </causes>
    </accept>
    <accept>
      <effect id="receive" specType="Control_Protocol_Controller"/>
      <causes>
        <port id="sndToController" specType="AP_RequestWorker"/>
      </causes>
    </accept>
    <accept>
      <effect id="sndToDri" specType="Power_Supply_Protocol_Controller"/>
      <causes>
        <port id="rcvReq" specType="Power_Supply_Handler"/>
      </causes>
    </accept>
    <accept>
      <effect id="rcvFromDri" specType="Power_Supply_Protocol_Controller"/>
      <causes>
        <port id="sndRes" specType="Power_Supply_Handler"/>
      </causes>
    </accept>
    <accept>
      <effect id="send" specType="Power_Supply_Protocol_Controller"/>
      <causes>
        <port id="add" specType="AP_ReceiverFifo"/>
      </causes>
    </accept>
    <accept>
      <effect id="rcvDriver" specType="Power_Supply_Protocol_Controller"/>
      <causes>
        <port id="sndRes" specType="Power_Supply_Handler"/>
      </causes>
    </accept>
    <accept>
      <effect id="receive" specType="Power_Supply_Protocol_Controller"/>
      <causes>
        <port id="sndToController" specType="AP_RequestWorker"/>
      </causes>
    </accept>
    <accept>
      <effect id="sndToDri" specType="Camera_Protocol_Controller"/>
      <causes>
        <port id="rcvReq" specType="Camera_Stream_Handler"/>
        <port id="rcvReq" specType="Camera_Capture_Handler"/>
      </causes>
    </accept>
    <accept>
      <effect id="rcvFromDri" specType="Camera_Protocol_Controller"/>
      <causes>
        <port id="sndRes" specType="Camera_Capture_Handler"/>
        <port id="sndRes" specType="Camera_Stream_Handler"/>
      </causes>
    </accept>
    <accept>
      <effect id="send" specType="Camera_Protocol_Controller"/>
      <causes>
        <port id="add" specType="AP_ReceiverFifo"/>
      </causes>
    </accept>
    <accept>
      <effect id="rcvDriver" specType="Camera_Protocol_Controller"/>
      <causes>
        <port id="sndRes" specType="Camera_Capture_Handler"/>
        <port id="sndRes" specType="Camera_Stream_Handler"/>
      </causes>
    </accept>
    <accept>
      <effect id="receive" specType="Camera_Protocol_Controller"/>
      <causes>
        <port id="sndToController" specType="AP_RequestWorker"/>
      </causes>
    </accept>
    <accept>
      <effect id="rcvReq" specType="Power_Supply_Handler"/>
      <causes>
        <port id="sndToDri" specType="Power_Supply_Protocol_Controller"/>
      </causes>
    </accept>
    <accept>
      <effect id="sndRes" specType="Power_Supply_Handler"/>
      <causes>
        <port id="rcvDriver" specType="Power_Supply_Protocol_Controller"/>
        <port id="rcvFromDri" specType="Power_Supply_Protocol_Controller"/>
      </causes>
    </accept>
    <accept>
      <effect id="rcvRes" specType="Log_Handler"/>
      <causes>
        <port id="sndToDri" specType="Log_Protocol_Controller"/>
      </causes>
    </accept>
    <accept>
      <effect id="send_log" specType="Log_Handler"/>
      <causes>
        <port id="rcvDriver" specType="Log_Protocol_Controller"/>
        <port id="rcvFromDri" specType="Log_Protocol_Controller"/>
      </causes>
    </accept>
    <accept>
      <effect id="rcvReq" specType="Control_Connect_Handler"/>
      <causes>
        <port id="sndToDri" specType="Control_Protocol_Controller"/>
      </causes>
    </accept>
    <accept>
      <effect id="sndRes" specType="Control_Connect_Handler"/>
      <causes>
        <port id="rcvFromDri" specType="Control_Protocol_Controller"/>
      </causes>
    </accept>
    <accept>
      <effect id="rcvReq" specType="Camera_Stream_Handler"/>
      <causes>
        <port id="sndToDri" specType="Camera_Protocol_Controller"/>
      </causes>
    </accept>
    <accept>
      <effect id="sndRes" specType="Camera_Stream_Handler"/>
      <causes>
        <port id="rcvDriver" specType="Camera_Protocol_Controller"/>
        <port id="rcvFromDri" specType="Camera_Protocol_Controller"/>
      </causes>
    </accept>
    <accept>
      <effect id="rcvReq" specType="Control_Disconnect_Handler"/>
      <causes>
        <port id="sndToDri" specType="Control_Protocol_Controller"/>
      </causes>
    </accept>
    <accept>
      <effect id="sndRes" specType="Control_Disconnect_Handler"/>
      <causes>
        <port id="rcvDriver" specType="Control_Protocol_Controller"/>
        <port id="rcvFromDri" specType="Control_Protocol_Controller"/>
      </causes>
    </accept>
    <accept>
      <effect id="rcvReq" specType="Camera_Capture_Handler"/>
      <causes>
        <port id="sndToDri" specType="Camera_Protocol_Controller"/>
      </causes>
    </accept>
    <accept>
      <effect id="sndRes" specType="Camera_Capture_Handler"/>
      <causes>
        <port id="rcvDriver" specType="Camera_Protocol_Controller"/>
        <port id="rcvFromDri" specType="Camera_Protocol_Controller"/>
      </causes>
    </accept>
  </accepts>
  <requires>
    <require>
      <effect id="sndToDri" specType="Log_Protocol_Controller"/>
      <causes>
        <option>
          <causes>
            <port id="rcvRes" specType="Log_Handler"/>
          </causes>
        </option>
      </causes>
    </require>
    <require>
      <effect id="rcvFromDri" specType="Log_Protocol_Controller"/>
      <causes>
        <option>
          <causes>
            <port id="send_log" specType="Log_Handler"/>
          </causes>
        </option>
      </causes>
    </require>
    <require>
      <effect id="send" specType="Log_Protocol_Controller"/>
      <causes>
        <option>
          <causes>
            <port id="add" specType="AP_ReceiverFifo"/>
          </causes>
        </option>
      </causes>
    </require>
    <require>
      <effect id="rcvDriver" specType="Log_Protocol_Controller"/>
      <causes>
        <option>
          <causes>
            <port id="send_log" specType="Log_Handler"/>
          </causes>
        </option>
      </causes>
    </require>
    <require>
      <effect id="receive" specType="Log_Protocol_Controller"/>
      <causes>
        <option>
          <causes>
            <port id="sndToController" specType="AP_RequestWorker"/>
          </causes>
        </option>
      </causes>
    </require>
    <require>
      <effect id="add" specType="AP_ReceiverFifo"/>
      <causes>
        <option>
          <causes>
            <port id="send" specType="Control_Protocol_Controller"/>
          </causes>
        </option>
        <option>
          <causes>
            <port id="send" specType="Camera_Protocol_Controller"/>
          </causes>
        </option>
        <option>
          <causes>
            <port id="send" specType="Log_Protocol_Controller"/>
          </causes>
        </option>
        <option>
          <causes>
            <port id="send" specType="Power_Supply_Protocol_Controller"/>
          </causes>
        </option>
      </causes>
    </require>
    <require>
      <effect id="rm" specType="AP_ReceiverFifo"/>
      <causes>
        <option>
          <causes>
            <port id="receive" specType="AP_MessageWorker"/>
          </causes>
        </option>
      </causes>
    </require>
    <require>
      <effect id="sndRes" specType="AP_MessageWorker"/>
      <causes>
        <option>
          <causes>
            <port id="getReq" specType="AP_RequestWorker"/>
          </causes>
        </option>
      </causes>
    </require>
    <require>
      <effect id="receive" specType="AP_MessageWorker"/>
      <causes>
        <option>
          <causes>
            <port id="rm" specType="AP_ReceiverFifo"/>
          </causes>
        </option>
      </causes>
    </require>
    <require>
      <effect id="sndReq" specType="AP_MessageWorker"/>
      <causes>
        <option>
          <causes>
            <port id="getRes" specType="AP_ResponseWorker"/>
          </causes>
        </option>
      </causes>
    </require>
    <require>
      <effect id="sndToController" specType="AP_RequestWorker"/>
      <causes>
        <option>
          <causes>
            <port id="receive" specType="Control_Protocol_Controller"/>
          </causes>
        </option>
        <option>
          <causes>
            <port id="receive" specType="Log_Protocol_Controller"/>
          </causes>
        </option>
        <option>
          <causes>
            <port id="receive" specType="Power_Supply_Protocol_Controller"/>
          </causes>
        </option>
        <option>
          <causes>
            <port id="receive" specType="Camera_Protocol_Controller"/>
          </causes>
        </option>
      </causes>
    </require>
    <require>
      <effect id="getReq" specType="AP_RequestWorker"/>
      <causes>
        <option>
          <causes>
            <port id="sndRes" specType="AP_MessageWorker"/>
          </causes>
        </option>
      </causes>
    </require>
    <require>
      <effect id="getRes" specType="AP_ResponseWorker"/>
      <causes>
        <option>
          <causes>
            <port id="sndReq" specType="AP_MessageWorker"/>
          </causes>
        </option>
      </causes>
    </require>
    <require>
      <effect id="sndToDri" specType="Control_Protocol_Controller"/>
      <causes>
        <option>
          <causes>
            <port id="rcvReq" specType="Control_Disconnect_Handler"/>
          </causes>
        </option>
        <option>
          <causes>
            <port id="rcvReq" specType="Control_Connect_Handler"/>
          </causes>
        </option>
      </causes>
    </require>
    <require>
      <effect id="rcvFromDri" specType="Control_Protocol_Controller"/>
      <causes>
        <option>
          <causes>
            <port id="sndRes" specType="Control_Connect_Handler"/>
          </causes>
        </option>
        <option>
          <causes>
            <port id="sndRes" specType="Control_Disconnect_Handler"/>
          </causes>
        </option>
      </causes>
    </require>
    <require>
      <effect id="send" specType="Control_Protocol_Controller"/>
      <causes>
        <option>
          <causes>
            <port id="add" specType="AP_ReceiverFifo"/>
          </causes>
        </option>
      </causes>
    </require>
    <require>
      <effect id="rcvDriver" specType="Control_Protocol_Controller"/>
      <causes>
        <option>
          <causes>
            <port id="sndRes" specType="Control_Connect_Handler"/>
          </causes>
        </option>
        <option>
          <causes>
            <port id="sndRes" specType="Control_Disconnect_Handler"/>
          </causes>
        </option>
      </causes>
    </require>
    <require>
      <effect id="receive" specType="Control_Protocol_Controller"/>
      <causes>
        <option>
          <causes>
            <port id="sndToController" specType="AP_RequestWorker"/>
          </causes>
        </option>
      </causes>
    </require>
    <require>
      <effect id="sndToDri" specType="Power_Supply_Protocol_Controller"/>
      <causes>
        <option>
          <causes>
            <port id="rcvReq" specType="Power_Supply_Handler"/>
          </causes>
        </option>
      </causes>
    </require>
    <require>
      <effect id="rcvFromDri" specType="Power_Supply_Protocol_Controller"/>
      <causes>
        <option>
          <causes>
            <port id="sndRes" specType="Power_Supply_Handler"/>
            <port id="sndRes" specType="Power_Supply_Handler"/>
          </causes>
        </option>
      </causes>
    </require>
    <require>
      <effect id="send" specType="Power_Supply_Protocol_Controller"/>
      <causes>
        <option>
          <causes>
            <port id="add" specType="AP_ReceiverFifo"/>
          </causes>
        </option>
      </causes>
    </require>
    <require>
      <effect id="rcvDriver" specType="Power_Supply_Protocol_Controller"/>
      <causes>
        <option>
          <causes>
            <port id="sndRes" specType="Power_Supply_Handler"/>
            <port id="sndRes" specType="Power_Supply_Handler"/>
          </causes>
        </option>
      </causes>
    </require>
    <require>
      <effect id="receive" specType="Power_Supply_Protocol_Controller"/>
      <causes>
        <option>
          <causes>
            <port id="sndToController" specType="AP_RequestWorker"/>
          </causes>
        </option>
      </causes>
    </require>
    <require>
      <effect id="sndToDri" specType="Camera_Protocol_Controller"/>
      <causes>
        <option>
          <causes>
            <port id="rcvReq" specType="Camera_Stream_Handler"/>
          </causes>
        </option>
        <option>
          <causes>
            <port id="rcvReq" specType="Camera_Capture_Handler"/>
          </causes>
        </option>
      </causes>
    </require>
    <require>
      <effect id="rcvFromDri" specType="Camera_Protocol_Controller"/>
      <causes>
        <option>
          <causes>
            <port id="sndRes" specType="Camera_Capture_Handler"/>
          </causes>
        </option>
        <option>
          <causes>
            <port id="sndRes" specType="Camera_Stream_Handler"/>
          </causes>
        </option>
      </causes>
    </require>
    <require>
      <effect id="send" specType="Camera_Protocol_Controller"/>
      <causes>
        <option>
          <causes>
            <port id="add" specType="AP_ReceiverFifo"/>
          </causes>
        </option>
      </causes>
    </require>
    <require>
      <effect id="rcvDriver" specType="Camera_Protocol_Controller"/>
      <causes>
        <option>
          <causes>
            <port id="sndRes" specType="Camera_Capture_Handler"/>
          </causes>
        </option>
        <option>
          <causes>
            <port id="sndRes" specType="Camera_Stream_Handler"/>
          </causes>
        </option>
      </causes>
    </require>
    <require>
      <effect id="receive" specType="Camera_Protocol_Controller"/>
      <causes>
        <option>
          <causes>
            <port id="sndToController" specType="AP_RequestWorker"/>
          </causes>
        </option>
      </causes>
    </require>
    <require>
      <effect id="rcvReq" specType="Power_Supply_Handler"/>
      <causes>
        <option>
          <causes>
            <port id="sndToDri" specType="Power_Supply_Protocol_Controller"/>
          </causes>
        </option>
      </causes>
    </require>
    <require>
      <effect id="sndRes" specType="Power_Supply_Handler"/>
      <causes>
        <option>
          <causes>
            <port id="rcvDriver" specType="Power_Supply_Protocol_Controller"/>
          </causes>
        </option>
        <option>
          <causes>
            <port id="rcvFromDri" specType="Power_Supply_Protocol_Controller"/>
          </causes>
        </option>
      </causes>
    </require>
    <require>
      <effect id="rcvRes" specType="Log_Handler"/>
      <causes>
        <option>
          <causes>
            <port id="sndToDri" specType="Log_Protocol_Controller"/>
          </causes>
        </option>
      </causes>
    </require>
    <require>
      <effect id="send_log" specType="Log_Handler"/>
      <causes>
        <option>
          <causes>
            <port id="rcvDriver" specType="Log_Protocol_Controller"/>
          </causes>
        </option>
        <option>
          <causes>
            <port id="rcvFromDri" specType="Log_Protocol_Controller"/>
          </causes>
        </option>
      </causes>
    </require>
    <require>
      <effect id="rcvReq" specType="Control_Connect_Handler"/>
      <causes>
        <option>
          <causes>
            <port id="sndToDri" specType="Control_Protocol_Controller"/>
          </causes>
        </option>
      </causes>
    </require>
    <require>
      <effect id="sndRes" specType="Control_Connect_Handler"/>
      <causes>
        <option>
          <causes>
            <port id="rcvFromDri" specType="Control_Protocol_Controller"/>
          </causes>
        </option>
        <option>
          <causes>
            <port id="rcvFromDri" specType="Control_Protocol_Controller"/>
          </causes>
        </option>
      </causes>
    </require>
    <require>
      <effect id="rcvReq" specType="Camera_Stream_Handler"/>
      <causes>
        <option>
          <causes>
            <port id="sndToDri" specType="Camera_Protocol_Controller"/>
          </causes>
        </option>
      </causes>
    </require>
    <require>
      <effect id="sndRes" specType="Camera_Stream_Handler"/>
      <causes>
        <option>
          <causes>
            <port id="rcvDriver" specType="Camera_Protocol_Controller"/>
          </causes>
        </option>
        <option>
          <causes>
            <port id="rcvFromDri" specType="Camera_Protocol_Controller"/>
          </causes>
        </option>
      </causes>
    </require>
    <require>
      <effect id="rcvReq" specType="Control_Disconnect_Handler"/>
      <causes>
        <option>
          <causes>
            <port id="sndToDri" specType="Control_Protocol_Controller"/>
          </causes>
        </option>
      </causes>
    </require>
    <require>
      <effect id="sndRes" specType="Control_Disconnect_Handler"/>
      <causes>
        <option>
          <causes>
            <port id="rcvDriver" specType="Control_Protocol_Controller"/>
          </causes>
        </option>
        <option>
          <causes>
            <port id="rcvFromDri" specType="Control_Protocol_Controller"/>
          </causes>
        </option>
      </causes>
    </require>
    <require>
      <effect id="rcvReq" specType="Camera_Capture_Handler"/>
      <causes>
        <option>
          <causes>
            <port id="sndToDri" specType="Camera_Protocol_Controller"/>
          </causes>
        </option>
      </causes>
    </require>
    <require>
      <effect id="sndRes" specType="Camera_Capture_Handler"/>
      <causes>
        <option>
          <causes>
            <port id="rcvDriver" specType="Camera_Protocol_Controller"/>
          </causes>
        </option>
        <option>
          <causes>
            <port id="rcvFromDri" specType="Camera_Protocol_Controller"/>
          </causes>
        </option>
      </causes>
    </require>
  </requires>
</glue>
\end{lstlisting}

\end{appendices}

\end{document}